\documentclass[11pt]{article}

\setlength{\oddsidemargin}{0.25in}
\setlength{\evensidemargin}{\oddsidemargin}
\setlength{\textwidth}{6in}
\setlength{\textheight}{8in}
\setlength{\topmargin}{-0.0in}

\usepackage{amsmath}
\usepackage{amssymb,amsfonts,textcomp}
\usepackage{latexsym}
\usepackage[numbers]{natbib}
\usepackage[noend,ruled]{algorithm2e}

\usepackage{graphicx}
\usepackage{latexsym}
\usepackage{multirow}
\usepackage{rotating}

\newcommand{\np}{{{{\mathrm{NP}}}}}
\newcommand{\p}{{{{\mathrm{P}}}}}

\newcommand{\rot}{{{\mathrm{R}}}}
\newcommand{\poly}{{{\mathrm{poly}}}}

\newtheorem{theorem}{Theorem}
\newtheorem{definition}[theorem]{Definition}
\newtheorem{lemma}[theorem]{Lemma}
\newtheorem{proposition}[theorem]{Proposition}
\newtheorem{corollary}[theorem]{Corollary}
\newtheorem{example}[theorem]{Example}

\newenvironment{proof}{\noindent\textbf{Proof}\quad}{\hfill$\Box$\medskip}
\newenvironment{proofof}[1]{\noindent\textbf{#1}\quad}{\hfill$\Box$\medskip}
\newenvironment{proofsketch}{\noindent\textbf{Proof sketch}\quad}{\hfill$\Box$\medskip}

\newcommand{\fixlist}{\addtolength{\itemsep}{-4pt}}
\newcommand{\omitforIJCAI}[1]{}

\newcommand{\ordset}[1]{{#1_\rightarrow}}
\newcommand{\revset}[1]{{#1_\leftarrow}}

\newcommand{\pos}{{{{\mathrm{pos}}}}}
\newcommand{\borda}{{{\mathrm{B}}}}

\newcommand{\alphaborda}{{{\alpha_\borda}}}

\renewcommand{\|}{{{|}}}

\hyphenation{assign-ment}

\setcounter{topnumber}{2}
\setcounter{bottomnumber}{2}
\setcounter{totalnumber}{4}     %
\setcounter{dbltopnumber}{2}    %

\title{The Complexity of Fully Proportional Representation for Single-Crossing Electorates}

\author{\makebox[.3\linewidth]{Piotr Skowron}  \\ University of Warsaw \\ Warsaw, Poland \\ p.skowron@mimuw.edu.pl
\and
       \makebox[.4\linewidth]{Lan Yu} \\ Nanyang Technological University \\ Singapore \\ jen.lan.yu@gmail.com
\and
        \makebox[.3\linewidth]{Piotr Faliszewski} \\ AGH University \\ Krakow, Poland \\ faliszew@agh.edu.pl
\and 
        \makebox[.4\linewidth]{Edith Elkind} \\ Nanyang Technological University \\ Singapore \\ eelkind@ntu.edu.sg
}

\begin{document}
\maketitle

\begin{abstract}
  We study the complexity of winner determination in single-crossing
  elections under two classic fully proportional representation
  rules---Chamberlin--Courant's rule and Monroe's rule. Winner
  determination for these rules is known to be NP-hard for
  unrestricted preferences. We show that for single-crossing
  preferences this problem admits a polynomial-time algorithm for
  Chamberlin--Courant's rule, but remains NP-hard for Monroe's
  rule. Our algorithm for Chamberlin--Courant's rule can be modified
  to work for elections with {\em bounded single-crossing width}.  To
  circumvent the hardness result for Monroe's rule, we consider
  single-crossing elections that satisfy an additional constraint,
  namely, ones where each candidate is ranked first by at least one
  voter (such elections are called narcissistic).  For single-crossing
  narcissistic elections, we provide an efficient algorithm for the
  egalitarian version of Monroe's rule.
\end{abstract}

\section{Introduction}\label{sec:intro}
Parliamentary elections, i.e., procedures for selecting a fixed-size
set of candidates that, in some sense, best represent the voters,
received a lot of attention in the literature.  Some well-known
approaches include first-past-the-post system (FPTP), where the voters
are divided into districts and in each district a plurality election
is held to find this district's representative; party-list systems,
where the voters vote for parties and later the parties distribute the
seats among their members; SNTV (single nontransferable vote) and Bloc
rules, where the voters cast $t$-approval ballots and the rule picks
$k$ candidates with the highest approval scores (here $k$ is the
target parliament size, and $t = 1$ for SNTV and $t = k$ for Bloc);
and a variant of STV (single transferable vote). In this paper, we
focus on two voting rules that, for each voter, explicitly define the
candidate that will represent her in the parliament (such rules are
said to provide {\em fully proportional representation}), namely,
Chamberlin--Courant's rule~\cite{cha-cou:j:cc} and Monroe's
rule~\cite{mon:j:monroe}.  Besides parliamentary elections, the winner
determination algorithms for these rules can also be used for other
applications, such as resource
allocation~\cite{mon:j:monroe,sko-fal-sli:c:multiwinner} and
recommender systems~\cite{bou-lu:c:chamberlin-courant}.

Let us consider an election where we seek a $k$-member parliament
chosen out of $m$ candidates by $n$ voters. Both Chamberlin--Courant's and Monroe's rule
work by finding a function $\Phi$ that
assigns to each voter $v$ the candidate that is to represent $v$ in
the parliament. This function is required to assign at most $k$ candidates altogether.\footnote{Under Monroe's rule we are required
  to pick exactly $k$ winners. Some authors also impose this requirement 
  in the case of Chamberlin--Courant's rule, but allowing for smaller parliaments
  appears to be more consistent with the spirit of this rule and  
  is standard in its computational analysis
  (see, e.g.,~\cite{bou-lu:c:chamberlin-courant,bet-sli-uhl:j:mon-cc,sko-fal-sli:c:multiwinner,sko-fal-sli:c:monroe-cc-experiments,yu-chan-elk:c:sptrees}).  
  In any case, this distinction has no bite if there are at least
  $k$ candidates that are ranked first by some voter, which is usually the case
  in political elections.
}
Further, under Monroe's rule each candidate is either assigned to about
$\frac{n}{k}$ voters or to none. The latter restriction does not apply
to Chamberlin--Courant's rule, where each selected candidate may represent
an arbitrary number of voters, and, as a consequence, the parliament elected
in this manner may have to use weighted voting in its proceedings.
Finally, each voter should be represented by a candidate that this
voter ranks as high as possible.

To specify the last requirement formally, we assume that there is a
global \emph{dissatisfaction function} $\alpha$, $\alpha \colon
\mathbb{N} \rightarrow \mathbb{N}$, such that $\alpha(i)$ is a voter's
dissatisfaction from being represented by a candidate that she views
as $i$-th best. (A typical example is Borda dissatisfaction
function $\alpha_\borda$ given by $\alpha_\borda(i) = i-1$.) 
In the utilitarian variants of Chamberlin--Courant's and Monroe's rules we seek assignments that
minimize the sum of voters' dissatisfactions; in the egalitarian
variants (introduced recently by Betzler et al.~\cite{bet-sli-uhl:j:mon-cc})
we seek assignments that minimize the dissatisfaction of the
worst-off voter.

Chamberlin--Courant's and Monroe's rules have a number of attractive properties, 
which distinguish them from other multiwinner rules. 
Indeed, they elect parliaments that (at least
in some sense) proportionally represent the voters, ensure that
candidates who are not individually popular cannot make it to the
parliament even if they come from very popular parties, and take
minority candidates into account. In contrast, FPTP can
provide largely disproportionate results, party-list systems
cause members of parliament to feel more
responsible to the parties than to the voters, SNTV and Bloc tend
to disregard minority candidates, and STV is believed to
put too much emphasis on voters' top preferences.

Unfortunately, Chamberlin--Courant's and Monroe's rules do have one
flaw that makes them impractical: It is $\np$-hard to compute their
winners~\cite{pro-ros-zoh:j:proportional-representation,bou-lu:c:chamberlin-courant,bet-sli-uhl:j:mon-cc}.
Nonetheless, these rules are so attractive that there is a growing
body of research on computing their winners exactly (e.g., through
integer linear programming
formulations~\cite{bra-pot:j:proportional-representation}, by means of
fixed-parameter tractability analysis~\cite{bet-sli-uhl:j:mon-cc}, 
by considering restricted
preference domains~\cite{bet-sli-uhl:j:mon-cc,yu-chan-elk:c:sptrees}) and
approximately~\cite{bou-lu:c:chamberlin-courant,sko-fal-sli:c:multiwinner,sko-fal-sli:c:monroe-cc-experiments}.
We continue this line of research by considering the complexity of
finding exact Chamberlin--Courant and Monroe winners for the case where
voters' preferences are single-crossing. Our results complement those
of Betzler et al.~\cite{bet-sli-uhl:j:mon-cc} for single-peaked
electorates.

Recall that voters are said to have single-crossing preferences if it is possible
to order them so that for every pair of candidates $a, b$ the voters
who prefer $a$ to $b$ form a consecutive block on one side of the
order and the voters who prefer $b$ to $a$ form a consecutive block on
the other side. For example, it is quite natural to assume that the
voters are aligned on the standard political left-right axis. Given
two candidates $a$ and $b$, where $a$ is viewed as more left-wing and
$b$ is viewed as more right-wing, the left-leaning voters would prefer $a$
to $b$ and the right-leaning voters would prefer $b$ to $a$.  
While real-life elections are typically too noisy to have this property, 
it is plausible that they may be close to single-crossing, 
and it is important to understand the complexity of the idealized model
before proceeding to study nearly single-crossing profiles 
(in the context of single-peaked elections this agenda has been successfully pursued
by Faliszewski et al.~\cite{fal-hem-hem:c:nearly-sp}).

Our main results are as follows: for single-crossing elections 
winner determination under Chamberlin--Courant's rule is in $\p$
(for every dissatisfaction function, and both for the utilitarian 
and for the egalitarian version of this rule), 
but under Monroe's rule it is $\np$-hard. Our hardness result for Monroe's rule
applies to the utilitarian setting with Borda dissatisfaction function.
Our algorithm for Chamberlin--Courant's rule extends to elections that have bounded 
{\em single-crossing width} (see~\cite{cor-gal-spa:c:sp-width,cor-gal-spa:c:spsc-width}).
Our proof proceeds by showing that for single-crossing elections
Chamberlin--Courant's rule admits an optimal assignment that
has the {\em contiguous blocks property}: the set of voters assigned to an elected representative
forms a contiguous block in the voters' order witnessing that the election is single-crossing.
This property can be interpreted as saying that each selected candidate represents
a group of voters who are fairly similar to each other, and we believe it to be desirable
in the context of proportional representation.

The $\np$-hardness result for Monroe's rule motivates us to search
for further domain restrictions that may make this problem tractable.
To this end, we focus on the egalitarian version of Monroe's rule 
and, following the example of Cornaz et al.~\cite{cor-gal-spa:c:sp-width}, consider
elections that, in addition to being single-crossing, are {\em narcissistic}, 
i.e., have the property that every candidate is ranked first by at least one voter.  
In parliamentary elections, narcissistic profiles are very natural: 
we expect all candidates to vote for themselves.  We provide a
polynomial-time algorithm for the egalitarian version of Monroe's rule
for all elections that belong to this class. Our algorithm is based on
the observation that for single-crossing narcissistic elections under
the egalitarian version of Monroe's rule there is always an optimal
assignment that satisfies the contiguous blocks property.  We show,
however, that this result does not extend to general single-crossing
elections or to the utilitarian version of Monroe's rule: in both
cases, requiring the contiguous blocks property may rule out all
optimal assignments.

In a sense, our result for single-crossing narcissistic elections is
not new: it can be shown that such elections are single-peaked (this
result is implicit in the work of Barber\`a and
Moreno~\cite{bar-mor:j:top-monotonicity}), and Betzler et
al.~\cite{bet-sli-uhl:j:mon-cc} provide a polynomial-time algorithm
for the egalitarian version of Monroe's rule for single-peaked
electorates. However, our algorithm 
has two significant advantages over the one of Betzler et al.: First,
it has considerably better worst-case running time, and second, it
produces assignments that have the contiguous blocks property.  In
contrast, if we formulate the analogue of the contiguous blocks
property for single-peaked elections, by considering the ordering of
the voters that is induced by the axis (see Section~\ref{sec:contig}
for details), we can construct an election where no optimal assignment
has the contiguous blocks property; this holds both for Monroe's rule
and for Chamberlin--Courant's rule (and both for the egalitarian
version and for the utilitarian version of either rule).

The paper is organized as follows. In Section~\ref{sec:prelim} we
provide the required background, give the definitions of Monroe's and
Chamberlin--Courant's rules, and define single-crossing and
single-peaked elections. Then, in Sections~\ref{sec:cc}
and~\ref{sec:m}, we discuss the complexity of winner determination
under Chamberlin--Courant's and Monroe's rules, respectively. We show
the limits of the contiguous blocks property approach in
Section~\ref{sec:contig}. We conclude the paper in
Section~\ref{sec:conclusions} by summarizing our results and
discussing future research directions.

\section{Preliminaries}\label{sec:prelim}
For every positive integer $s$, we let $[s]$ denote the set $\{1, \ldots, s\}$.
An \emph{election} is a pair $E = (C,V)$ where 
$C = \{c_1, \ldots, c_m\}$ is a set of candidates and $V = (v_1, \ldots, v_n)$ is an
ordered list of voters. Each voter $v \in V$ has a \emph{preference
  order} $\succ_v$, i.e., a linear order over $C$ that ranks all the
candidates from the most desirable one to the least desirable one.
For each voter $v \in V$ and each candidate $c \in C$, we denote by $\pos_v(c)$
the position of $c$ in $v$'s preference order (the top
candidate has position $1$ and the last candidate has position
$\|C\|$).  We refer to the list $V$ as the \emph{preference profile}.

Given an election $E=(C, V)$ and a subset of candidates $D\subset C$,
we denote by $V|_D$ the profile obtained by restricting the preference
order of each voter in $V$ to $D$.  We denote the concatenation of two
voter lists $U$ and $V$ by $U+V$; if $U$ consists of a single vote $u$
we simply write $u+V$.  A list $U$ is said to be a {\em sublist} of a
list $V$ (denoted by $U\subseteq V$) if $U$ can be obtained from $V$
by deleting voters.  An election $(C', V')$ is said to be a {\em
  subelection} of an election $(C,V)$ if $C'\subseteq C$ and
$V'=U|_{C'}$ for some $U\subseteq V$.  Given a subset of candidates
$A$, we denote by $\ordset{A}$ a fixed ordering of candidates in $A$
and by $\revset{A}$ the reverse of this ordering.  Given two disjoint
sets $A, B\subset C$, we write $\dots\succ A\succ B\succ\dots$ to
denote a vote where all candidates in $A$ are ranked above all
candidates in $B$.
\subsection{Chamberlin--Courant's and Monroe's Rules}
Both Chamberlin--Courant's rule and Monroe's rule rely on the notion
of a {\em dissatisfaction function} (also known as a {\em
  misrepresentation function}). This function specifies, for each
$i\in[m]$, a voter's dissatisfaction from being represented by
candidate she ranks in position~$i$.

\begin{definition}
For an $m$-candidate election, a {\em dissatisfaction function} 
is a nondecreasing function
$\alpha \colon [m] \rightarrow \mathbb{N}$ with $\alpha(1) = 0$.
\end{definition}
We will typically be interested in
families of dissatisfaction functions, $(\alpha^m)_{m=1}^{\infty}$,
with one function for each possible number of candidates.  In
particular, we will be interested in Borda dissatisfaction function
$\alpha_{\borda}^m(i) = \alphaborda(i) = i-1$. We assume that our
dissatisfaction functions are computable in polynomial time with
respect to $m$.

Let $k$ be a positive integer.  
A {\em $k$-CC-assignment function} for an election $E=(C, V)$ is a mapping 
$\Phi \colon V \rightarrow C$ such that $\| \Phi(V) \| \le k$.  A
{\em $k$-Monroe-assignment function} for $E$ is a $k$-CC-assignment function
that additionally satisfies the following constraints: $\| \Phi(V) \|
= k$, and for each $c \in C$ either $\|\Phi^{-1}(c)\| = 0$ or $\lfloor
\frac{n}{k} \rfloor \leq \|\Phi^{-1}(c)\| \leq \lceil \frac{n}{k}
\rceil$. That is, both assignment functions select (up to)
$k$ candidates, and a $k$-Monroe-assignment function
additionally ensures that each selected candidate is assigned to
roughly the same number of voters. For a given assignment function
$\Phi$, we say that voter $v\in V$ is \emph{represented} (in the parliament)
by candidate $\Phi(v)$.
There are several ways to measure the quality of an
assignment function $\Phi$ with respect to a dissatisfaction function $\alpha$;
we use the following two:
\begin{enumerate}\fixlist
\item $\ell_1(\Phi) = \sum_{i=1, \dots, n} \alpha(\pos_{v_i}(\Phi(v_i)))$, and
\item $\ell_\infty(\Phi) = \max_{i=1, \dots, n}\alpha(\pos_{v_i}(\Phi(v_i)))$.
\end{enumerate}
Intuitively, $\ell_1(\Phi)$ takes the utilitarian view
of measuring the sum of voters' dissatisfactions, whereas $\ell_\infty$ takes the egalitarian
view of looking at the worst-off voter only.

We are now ready to define the voting rules that are the subject of this paper.
\begin{definition}
  For every family of dissatisfaction functions
  $\alpha = (\alpha^m)_{m=1}^{\infty}$, every $R\in\{$CC, Monroe$\}$, and
  every $\ell\in\{\ell_1, \ell_\infty\}$, 
  an {\em $\alpha\hbox{-}\ell\hbox{-}R$ voting rule} is a mapping that takes 
  an election $E = (C,V)$ and a positive integer $k$ with $k\le \|C\|$ as its input, and
  returns a $k$-$R$-assignment function $\Phi$ for $E$ that minimizes
  $\ell(\Phi)$ (if there are several optimal assignments, the rule is
  free to return any of them).
\end{definition}

Chamberlin and Courant~\cite{cha-cou:j:cc} and Monroe~\cite{mon:j:monroe} 
proposed the utilitarian variants of
their rules and focused on Borda dissatisfaction function (though
Monroe also considered so-called $k$-approval dissatisfaction
functions). Egalitarian variants of both rules have been recently introduced by Betzler et
al.~\cite{bet-sli-uhl:j:mon-cc}.

\subsection{Single-Crossing and Single-Peaked Profiles}\label{sec:sc-sp}
The notion of single-crossing preferences dates back
to the work of Mirrlees~\cite{mir:j:single-crossing}; we also point
the reader to the work of Saporiti and
Tohm\'e~\cite{sap-toh:j:single-crossing-strategy-proofness} for some
settings where single-crossing preferences are studied.  
Formally, such elections are defined as follows.
\begin{definition}\label{def:sc}
  An election $E = (C,V)$, where $C = \{c_1, \ldots, c_m\}$ is a set of candidates 
  and $V = (v_1, \ldots, v_n)$ is an ordered list of
  voters, is {\em single-crossing} (with respect to the given
  order of voters) if for each pair of candidates $a$, $b$
  such that $a \succ_{v_1} b$, there exists a value $t_{a,b} \in [n]$
  such that $\{ i \in [n] \mid a \succ_{v_i} b \} = [t_{a,b}].$
\end{definition}
That is, as we sweep through the list of voters from the first one
towards the last one,  the relative order of every pair of candidates
changes at most once.

Definition~\ref{def:sc} refers to the ordering of the voters provided by $V$. 
Alternatively, one could simply require existence of an
ordering of the voters that satisfies the single-crossing property.
The advantage of our approach is that it simplifies notation, yet does not affect the complexity of
the problems that we study: one can compute 
an order of the voters that makes an election single-crossing
(or decide that such an order does not exist) in
polynomial time~\cite{elk-fal-sli:c:decloning,bre-che-woe:j:single-crossing}.

We also consider single-peaked elections~\cite{bla:j:rationale-of-group-decision-making}.
\begin{definition}
  Let $\succ$ be a preference order over candidate set $C$
  and let $\lhd$ be an order over $C$.
  We say that $\succ$ is {\em single-peaked with respect to $\lhd$} if for every triple of candidates
  $a, b, c \in C$ it holds that
  $
      ((a \lhd b \lhd c) \lor
       (c \lhd b \lhd a)) \implies (a \succ b \implies b \succ c).
  $
  An election $E = (C,V)$ is {\em single-peaked with respect to an
  order $\lhd$ over $C$} if the preference order of every voter $v\in V$ is
  single-peaked with respect to $\lhd$. An election $E = (C,V)$
  is {\em single-peaked} if there exists an order $\lhd$ over $C$ with respect to
  which it is single-peaked.
\end{definition}
If an election $E$ is single-peaked with respect to some order $\lhd$ then we
call $\lhd$ a \emph{societal axis} for $E$. There are polynomial-time algorithms
that given an election $E$ decide if it is single-peaked and if so, compute a
societal axis for it~\cite{bar-tri:j:stable-matching-from-psychological-model,esc-lan-ozt:c:single-peaked-consistency}.
Thus, just as in the case of single-crossing elections, we can freely assume that if an
election is single-peaked then we are given a societal axis as well.
\smallskip

\section{Chamberlin--Courant's Rule}\label{sec:cc}

We start our discussion by considering the complexity of
winner-determination under Cham\-berlin--Courant's rule, for the case of
single-crossing profiles.

\subsection{Single-Crossing Profiles}

A key observation in our analysis of Chamberlin--Courant's rule
is that for single-crossing
profiles there always exists an optimal $k$-CC-assignment function
where the voters
matched to a given candidate form contiguous blocks within the voters' order.
In what follows, we will say that assignments of this form
have the {\em contiguous blocks property}.
We believe that this property
is desirable from the social choice perspective: it means that voters
who are represented by the same candidate are quite similar, which makes
it easier for the candidate to act in a way that reflects
the preferences of the group he represents. Later, 
we will see that the contiguous blocks property also has useful algorithmic implications.

\begin{lemma}\label{lem:cc-blocks}
  Let $E = (C,V)$ be a single-crossing election, 
  where $C = \{c_1, \ldots, c_m\}$, $V = (v_1, \ldots,  v_n)$, and $v_1$
  has preference order $c_1 \succ \cdots \succ c_m$.
  Then for every $k\in [m]$, every
  dissatisfaction function $\alpha$ for $m$
  candidates, and every $\ell\in\{\ell_1, \ell_\infty\}$, there is an optimal
  $k$-CC assignment $\Phi$ for $E$ under $\alpha$-$\ell$-CC such that for each
  candidate $c_i \in C$, if $\Phi^{-1}(c_i) \neq \emptyset$ then there are
  two integers, $t_i$ and $t_i'$, $t_i \leq t_i'$, such that $\Phi^{-1}(c_i) =
  \{v_{t_i}, v_{t_i+1}, \ldots, v_{t_i'}\}$. Moreover, for each $i < j$ 
  such that $\Phi^{-1}(c_i) \neq \emptyset$ and $\Phi^{-1}(c_j) \neq \emptyset$ it
  holds that $t'_i < t_j$.
\end{lemma}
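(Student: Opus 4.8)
The plan is to decouple the choice of winning candidates from the choice of how voters are matched to them, and then to let the single-crossing structure force the matching to be blockwise. The key reduction is that, once the set of elected candidates is fixed, the best possible Chamberlin--Courant assignment simply sends each voter to her favorite among the winners. Concretely, I would fix an arbitrary optimal assignment $\Phi^*$ and set $W = \Phi^*(V)$, so that $\|W\| \le k$. Define a new assignment $\Phi$ by mapping each voter $v$ to her most preferred candidate in $W$; this is well defined and unique because each $\succ_v$ is a linear order. Since $\Phi^*(v) \in W$, we have $\pos_v(\Phi(v)) \le \pos_v(\Phi^*(v))$ for every $v$, and as $\alpha$ is nondecreasing this gives $\alpha(\pos_v(\Phi(v))) \le \alpha(\pos_v(\Phi^*(v)))$ voter by voter. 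Summing these inequalities handles $\ell_1$ and taking maxima handles $\ell_\infty$, so $\ell(\Phi) \le \ell(\Phi^*)$ and $\Phi$ is again optimal; moreover $\Phi(V) \subseteq W$, so $\Phi$ is a valid $k$-CC-assignment. It therefore suffices to prove the contiguous blocks property for the ``favorite-in-$W$'' assignment.

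The heart of the argument is a monotonicity claim: if $c_p$ is $v_\ell$'s favorite candidate in $W$ and $c_q$ is $v_{\ell'}$'s favorite candidate in $W$, and $\ell < \ell'$, then $p \le q$. To prove it I would argue by contradiction, supposing $p > q$. Because $v_1$ ranks $c_1 \succ \cdots \succ c_m$ and $q < p$, we have $c_q \succ_{v_1} c_p$, so by Definition~\ref{def:sc} the set of voters who prefer $c_q$ to $c_p$ is an initial segment $[t_{c_q,c_p}]$. Since $c_p$ is $v_\ell$'s favorite in $W$ we have $c_p \succ_{v_\ell} c_q$, so $v_\ell$ does not prefer $c_q$ to $c_p$ and hence $\ell > t_{c_q,c_p}$; since $c_q$ is $v_{\ell'}$'s favorite in $W$ we have $c_q \succ_{v_{\ell'}} c_p$, so $\ell' \le t_{c_q,c_p}$. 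Combining these yields $\ell > t_{c_q,c_p} \ge \ell'$, i.e.\ $\ell > \ell'$, contradicting $\ell < \ell'$.

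With monotonicity in hand the lemma reads off directly. The index of the candidate assigned to $v_\ell$ never decreases as $\ell$ runs from $1$ to $n$, so for each $c_i$ with $\Phi^{-1}(c_i) \neq \emptyset$ the preimage $\Phi^{-1}(c_i)$ is a set of consecutive voters $\{v_{t_i}, \ldots, v_{t_i'}\}$. Furthermore, for $i < j$ with both preimages nonempty, every voter assigned to $c_i$ precedes every voter assigned to $c_j$: otherwise some $v_\ell \mapsto c_i$ and $v_{\ell'} \mapsto c_j$ with $\ell > \ell'$ would contradict monotonicity, since then the favorite index would drop from $j$ to $i$. This gives $t_i' < t_j$, completing the proof.

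I do not expect a serious technical obstacle, as the exchange step is immediate once the candidate set is frozen. The only delicate point is getting the \emph{direction} of the monotonicity right: it hinges on combining the normalization that $v_1$ has the identity preference order with the precise form of the single-crossing threshold $t_{c_q,c_p}$, and a sign error there would silently break the blockwise conclusion.
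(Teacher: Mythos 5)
Your proof is correct. It shares the paper's two essential ingredients---the reduction to the ``favorite among the winners'' assignment, and the anchoring of a pairwise-preference contradiction at $v_1$---but it finishes differently. The paper takes $c_j$ to be $v_1$'s \emph{least} preferred candidate in $\Phi(V)$, shows that $\Phi^{-1}(c_j)$ must be a suffix of the voter list (via the same double-crossing contradiction you use: $c_k \succ_{v_1} c_j$, $c_j \succ_{v_i} c_k$, $c_k \succ_{v_{i'}} c_j$ with $i < i'$), and then deletes $c_j$ together with its voters and recurses on the remaining prefix with $k-1$ seats. You instead prove a single global monotonicity claim: under the favorite-in-$W$ assignment, the index of the candidate assigned to $v_\ell$ is nondecreasing in $\ell$. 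Your route buys several things. The ``moreover'' clause ($t_i' < t_j$ for $i < j$) falls out immediately, whereas in the paper it is implicit in the recursion; you avoid having to verify that the recursive call preserves the hypotheses (that the subelection is still single-crossing, that the surviving voters are still matched to their favorites among the surviving winners---true, but glossed over in the paper); and you spell out why the favorite-in-$W$ reassignment preserves optimality for both $\ell_1$ and $\ell_\infty$ via pointwise dominance and monotonicity of $\alpha$, which the paper asserts as a ``without loss of generality.'' Your handling of the crossing threshold is also the right one: with $q<p$ you get $c_q \succ_{v_1} c_p$, so the voters preferring $c_q$ to $c_p$ form the initial segment $[t_{c_q,c_p}]$, giving $\ell > t_{c_q,c_p} \ge \ell'$, the desired contradiction. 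The underlying mathematics is the same, but your induction-free packaging is tighter and arguably closer to how one would want to present the lemma.
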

\begin{proof}
  Fix a single-crossing election $E = (C,V)$ with 
  $C = \{c_1, \ldots, c_m\}$ and $V = (v_1, \ldots, v_n)$, and let $\Phi$
  be an optimal $k$-CC-assignment function for $E$ under
  $\alpha$-$\ell$-CC. We assume without loss of generality that for each voter
  $v_i$ in $V$, the candidate $\Phi(v_i)$ is $v_i$'s most preferred
  candidate is $\Phi(V)$. Let $c_j$ be $v_1$'s least preferred candidate in $\Phi(V)$.
  Now consider some voter $v_i$ such that $\Phi(v_i) = c_j$. 
  We have $\Phi(v_{i'}) = c_j$ for every voter
  $v_{i'}$ such that $i' > i$. Indeed, 
  suppose for the sake of contradiction that
  $\Phi(v_{i'}) = c_k$ for $k \neq j$.  By our choice of $c_j$ we have
  $c_k \succ_1 c_j$. On the other hand, we have $c_j \succ_i c_k$ and
  $c_k \succ_{i'} c_j$, a contradiction with $E$ being a single-crossing election. Hence, the
  voters that are matched to $c_j$ by $\Phi$ form a consecutive block at the
  end of the preference profile.

  To see that for each $c \in \Phi(V)$ it holds that voters in
  $\Phi^{-1}(c)$ form a consecutive block, it suffices to delete
  $c_j$ and the voters that are matched to $c_j$ from the profile, decrease $k$ by
  one, and repeat the same argument.
\end{proof}

Lemma~\ref{lem:cc-blocks} suggests a dynamic programming
algorithm for Chamberlin--Courant's rule. 

\begin{theorem}\label{thm:cc-easy}
  For every family $\alpha$ of polynomial-time computable
  dissatisfaction functions and for $\ell \in
  \{\ell_1,\ell_\infty\}$, there is a polynomial-time algorithm that
  given a single-crossing election $E$ and a positive integer $k$
  finds an optimal $k$-CC assignment for $E$ under $\alpha$-$\ell$-CC.
\end{theorem}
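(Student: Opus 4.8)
The plan is to leverage Lemma~\ref{lem:cc-blocks}, which tells us that an optimal assignment can be assumed to partition the voters into contiguous blocks $v_1,\dots,v_{t_1'}$, then $v_{t_1'+1},\dots$, and so on, where each block is assigned to a single candidate, and the blocks appear in order of increasing candidate index. Crucially, once we commit to the block structure, the candidate serving a given block $\{v_s,\dots,v_e\}$ should be chosen to minimize that block's contribution to the objective: for $\ell_1$ this is $\min_{c\in C}\sum_{i=s}^{e}\alpha(\pos_{v_i}(c))$, and for $\ell_\infty$ it is $\min_{c\in C}\max_{s\le i\le e}\alpha(\pos_{v_i}(c))$. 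Define $\cost(s,e)$ to be this per-block optimal cost; it can be precomputed for all $O(n^2)$ pairs $(s,e)$ by iterating over all $m$ candidates, in total time $O(n^2 m)$ (or better), using the polynomial-time computability of $\alpha$.

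First I would set up a dynamic program over prefixes of the voter list. Let $A(j,r)$ denote the optimal cost of serving the first $j$ voters $v_1,\dots,v_j$ using at most $r$ representatives, where each representative serves a contiguous block. The recurrence for the utilitarian ($\ell_1$) case is $A(j,r)=\min_{0\le j'<j}\bigl(A(j',r-1)+\cost(j'+1,j)\bigr)$, with base case $A(0,r)=0$ for all $r\ge0$ and $A(j,0)=\infty$ for $j>0$; the desired answer is $A(n,k)$. For the egalitarian ($\ell_\infty$) case the additive combination is replaced by a max: $A(j,r)=\min_{0\le j'<j}\max\bigl(A(j',r-1),\cost(j'+1,j)\bigr)$. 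The table has $O(nk)$ entries, each computed in $O(n)$ time, so filling it takes $O(n^2 k)$ time, which together with the precomputation is polynomial. The actual assignment $\Phi$ is recovered by standard backpointer tracing through the DP table.

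The remaining obligation is to argue correctness, i.e.\ that the DP value $A(n,k)$ equals the true optimum of $\alpha$-$\ell$-CC. One direction is immediate: every partition into at most $k$ contiguous blocks with an optimally chosen candidate per block is a valid $k$-CC-assignment, so $A(n,k)$ is at least the true optimum (the DP cannot beat the global minimum). For the other direction, Lemma~\ref{lem:cc-blocks} guarantees the existence of an optimal assignment whose voter-blocks are contiguous and ordered, and any such assignment corresponds to exactly one partition of $\{v_1,\dots,v_n\}$ into at most $k$ consecutive intervals; its cost is at least the sum (resp.\ max) of the per-block optima $\cost(\cdot,\cdot)$, which in turn is a feasible value in the DP and hence at least $A(n,k)$. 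Combining the two directions yields $A(n,k)=\OPT$.

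I do not expect a serious obstacle here, since the contiguous blocks property from Lemma~\ref{lem:cc-blocks} does all the structural work; the only point requiring mild care is the $\ell_\infty$ recurrence, where one must verify that minimizing the maximum over blocks decomposes correctly (the worst-off voter lies in some block, and within that block the best candidate is chosen independently of the other blocks, so the per-block minimization is valid). Beyond that, one should double-check that the assignment produced genuinely uses at most $k$ distinct candidates---distinct blocks may in principle be served by the same candidate, which only helps, so the bound $\|\Phi(V)\|\le k$ is respected.
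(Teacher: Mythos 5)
Your proposal is correct, and it rests on the same foundation as the paper's proof: Lemma~\ref{lem:cc-blocks} plus dynamic programming over prefixes of the voter order. The dynamic programs themselves are organized differently, though. The paper's table $\bigA{i}{j}{t}$ carries a third index $j$ ranging over prefixes $\{c_1,\dots,c_j\}$ of the candidate set (ordered by $v_1$'s preferences), and its recurrence guesses whether $c_j$ serves the final block of voters; this exploits the second half of Lemma~\ref{lem:cc-blocks}, namely that nonempty blocks appear in the order of their candidates' indices. You instead collapse the candidate choice into a precomputed per-block cost $\cost(s,e)$, minimized over \emph{all} of $C$, and run a two-index DP over (voter prefix, number of blocks). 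This uses only the contiguity half of the lemma, not the ordering half, and it is sound for Chamberlin--Courant precisely because of the point you flag at the end: the $k$-CC constraint $\|\Phi(V)\| \le k$ does not forbid two blocks from being served by the same candidate, so independent per-block minimization always yields a feasible assignment (this is also exactly where your decomposition would break for Monroe's rule, whose capacity constraints couple the blocks---consistent with the paper's hardness result in Section~\ref{sec:m}). Your two-directional correctness argument, including the $\ell_\infty$ decomposition, is sound, and your running time of $O(n^2 m)$ for the precomputation plus $O(n^2 k)$ for the DP matches, indeed slightly improves, the paper's stated $O(mn^2k)$ bound.
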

\newcommand{\bigA}[3]{{{A[#1,#2,#3]}}}
\begin{proof}
  Let $E = (C,V)$ be our input
  single-crossing election, where $C = \{c_1, \ldots, c_m\}$, $V =
  (v_1, \ldots, v_n)$ and $v_1$ has preference order $c_1 \succ \cdots \succ c_m$,
  and let $k$ be the target parliament size.

  For every $i\in \{0\}\cup[n]$, $j\in[m]$, and $t\in[k]$
  we define $\bigA{i}{j}{t}$ to be the optimal
  $\ell$-aggregated dissatisfaction that can be achieved with
  a $t$-CC-assignment function when
  considering subelection $(C_j, V_i$), where $C_j = \{c_1, \ldots, c_j\}$
  and $V_i = (v_1, \ldots, v_i)$
  (clearly, $(C_j,V_i)$ is single-crossing).  It
  is easy to see that for every $i\in [n]$, $j\in[m]\setminus\{1\}$ and $k\in[t]\setminus\{1\}$
  the following recursive
  relation holds (in the equation below, we abuse notation and treat
  $\ell$ as the respective norm on real vectors, i.e., we assume
  that it maps a list of values to their sum
  (when $\ell = \ell_1$) or their maximum (when $\ell = \ell_\infty$)):
  \begin{align*}
  \bigA{i}{j}{t}= \min & \left\{\bigA{i}{j-1}{t}, \min_{i^{\ast}<i} \ell\left(\bigA{i^{\ast}}{j-1}{t-1}, %
\alpha(\pos_{v_{i^{\ast}+1}}(c_j)), %
\ldots, 
\alpha(\pos_{v_{i}}(c_j))\right)  %
\right\}.
  \end{align*}
  The idea of this recursive relation is to guess the
  first voter $v_{i^{\ast}+1}$ to be represented by $c_j$; the optimal
  representation of the preceding voters is found recursively, for
  assembly size $t-1$. To take care of the possibility that $c_j$ does
  not participate in the solution, we also take $\bigA{i}{j-1}{t}$ into account.

  The base cases of the above recursion are as follows.
  For every $j \in [m]$ and $t \in [k]$, we have $\bigA{0}{j}{t} = 0$.
  For every $i \in [n]$ and $j \in [m]$, it holds that
  \begin{align*}
  \bigA{i}{j}{1} =\min_{j'\leq j} \ell\left(\alpha(\pos_{v_1}(c_{j'})), \ldots, \alpha(\pos_{v_i}(c_{j'}))\right) \textrm{.}
  \end{align*}
  For every $i\in[n]$, $j \in [m]$, and $t \geq j$, 
  we have $\bigA{i}{j}{t} = 0$ (we match each voter to her
  top candidate). These %
  conditions suffice for our
  recursion to be well-defined. Using dynamic programming,
  we can compute in polynomial time (in fact, in time
  $O(mn^2k)$) the optimal dissatisfaction of the voters and a
  parliament that achieves it.
\end{proof}

\subsection{Extension to Profiles with Bounded Single-Crossing Width}

Following the ideas of Cornaz, Galand, and
Spanjaard~\cite{cor-gal-spa:c:sp-width,cor-gal-spa:c:spsc-width}, we
can extend our algorithm for Chamberlin--Courant's rule to profiles with so-called {\em bounded single-crossing
width}.

\begin{definition}
  A set $D$, $D \subseteq C$, is a {\em clone set} in an election $E=(C,V)$ 
  if each voter in $V$ ranks the candidates from $D$ consecutively (but not necessarily in the same order).
\end{definition}
\begin{definition}
  We say that an election $E=(C, V)$ has {\em single-crossing width} 
  (respectively, {\em single-peaked width}) at most $w$ if there exists a partition of $C$ into sets
  $D_1, \ldots, D_t$ such that (a) for each $i\in[t]$ the set $D_i$ is a
  clone set in $E$ and $\|D_i\| \leq w$, and (b) if we
  contract each $D_i$ in each vote to a single candidate $d_i$, then
  the resulting preference profile is single-crossing
  (respectively, single-peaked).
\end{definition}
Profiles with small single-crossing %
width may arise, e.g., in parliamentary elections where the
candidates are divided into (small) parties and the voters have single-crossing
preferences over the parties, but not necessarily over the
candidates.
Using the same techniques as Cornaz et al., we obtain the following
result.
\begin{proposition}\label{pro:cc-sc-w}
  For every family $\alpha$ of polynomial-time computable
  dissatisfaction functions and for every $\ell \in \{\ell_1,\ell_\infty\}$, 
  there is an algorithm that
  given an election $E=(C, V)$ 
  with $C=\{c_1, \dots, c_m\}$, $V=(v_1, \dots, v_n)$
  whose single-crossing width
  is bounded by $w$, a partition
  of $C$ into clone sets that witnesses this width bound,
  and a positive integer $k$, 
  finds an optimal $k$-CC assignment for $E$ under $\alpha$-$\ell$-CC,
  and runs in time $\poly(m, n, k, 2^w)$.
\end{proposition}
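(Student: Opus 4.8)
The plan is to lift the dynamic program of Theorem~\ref{thm:cc-easy} from individual candidates to clone sets, exploiting the fact that the contracted profile is single-crossing. Write $D_1, \dots, D_t$ for the given partition into clone sets (each of size at most $w$), let $d_i$ be the contracted candidate obtained from $D_i$, and let $E' = (\{d_1, \dots, d_t\}, V')$ be the contracted profile, which is single-crossing by the width assumption; relabel the $D_i$ so that $v_1$ ranks $d_1 \succ' \cdots \succ' d_t$ in $E'$. The crucial elementary observation is that, because the members of each $D_i$ occupy a consecutive block in every vote, a voter ranks all of $D_i$ above all of $D_j$ in $E$ if and only if $d_i \succ'_v d_j$ in $E'$.

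First I would establish a clone-level analogue of Lemma~\ref{lem:cc-blocks}: there is an optimal $k$-CC assignment $\Phi$ in which every voter is represented by its most preferred selected candidate (as in the proof of Lemma~\ref{lem:cc-blocks}), and in which the voters served by members of a common clone set $D_i$ form a contiguous block, these blocks being ordered according to $d_1, \dots, d_t$. The argument mirrors Lemma~\ref{lem:cc-blocks} applied to $E'$. Let $D_j$ be the used clone set whose representative $d_j$ is ranked lowest by $v_1$, and suppose some $v_i$ is served from $D_j$ while a later voter $v_{i'}$, $i' > i$, is served from $D_{k'}$ with $k' < j$. Since $v_{i'}$ takes its best selected candidate, which lies in $D_{k'}$, and $D_j$ is used, consecutiveness of clones gives $d_{k'} \succ'_{i'} d_j$. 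Single-crossing of $E'$ then forces $d_{k'} \succ'_i d_j$, so $v_i$ ranks all of $D_{k'}$ above all of $D_j$ and hence would prefer a selected $D_{k'}$-candidate to any selected $D_j$-candidate, contradicting that $v_i$ is served from $D_j$. Deleting $D_j$ with its block and recursing yields the full block structure.

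With this structure in hand I would run a dynamic program indexed by $A[i, j, s]$, the optimal $\ell$-aggregated dissatisfaction achievable on the subelection of voters $v_1, \dots, v_i$ and clone sets $D_1, \dots, D_j$ using at most $s$ selected candidates in total. The recursion either skips $D_j$ (taking $A[i, j-1, s]$) or lets $D_j$ serve a contiguous block $v_{i^{\ast}+1}, \dots, v_i$ for some $i^{\ast} < i$; in the latter case we guess the number $r \le \min(w, s)$ of candidates drawn from $D_j$ and combine $A[i^{\ast}, j-1, s-r]$ with the cheapest way of serving the block by an $r$-element subset $S \subseteq D_j$, where voter $v$'s dissatisfaction under $S$ is $\min_{c \in S}\alpha(\pos_v(c))$, the value contributed by its best selected clone. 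The ingredient beyond Theorem~\ref{thm:cc-easy} is this inner subset choice, which charges $r$ to the budget and accounts for one clone set contributing several members to the parliament.

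The inner minimization is handled by brute force over $D_j$: there are at most $2^w$ subsets of each clone set, and for each $S$ the per-voter values $\min_{c \in S}\alpha(\pos_v(c))$ are precomputed, after which each block aggregate (a sum for $\ell_1$, a maximum for $\ell_\infty$) is read off from prefix sums or range-maximum tables; minimizing over subsets of each fixed size yields, for every clone set, block, and budget $r$, the optimal block cost. Plugging these into the recursion and backtracking recovers the selected candidates and the matching. The table has $O(nmk)$ entries and each transition ranges over $O(n)$ split points and $O(w)$ budget values with $O(1)$ lookups after an $O(m\,2^w n)$ precomputation, so the running time is $\poly(m, n, k, 2^w)$. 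I expect the main obstacle to be the clone-level block lemma: unlike in Lemma~\ref{lem:cc-blocks}, a clone set is not a single candidate and its members may be ordered differently by different voters, so reducing ``served from $D_j$'' to the single comparison $d_{k'} \succ'_v d_j$ must be justified carefully from the consecutiveness of clones together with the assumption that each voter receives its best selected candidate.
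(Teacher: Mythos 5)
Your proposal is correct and takes essentially the same approach as the paper: a clone-level analogue of Lemma~\ref{lem:cc-blocks} (voters represented from a common clone set form contiguous blocks, ordered by $v_1$'s contracted ranking), followed by a dynamic program over voter prefixes and clone-set prefixes whose transition guesses the subset $D'_j\subseteq D_j$ serving the last block and charges $\|D'_j\|$ against the budget $k$. The paper's sketch states the block lemma as an easy generalization and phrases the transition as guessing $(j, D'_j, v_i)$; your write-up merely makes the same DP state, the single-crossing argument on the contracted profile, and the $O(2^w)$ subset enumeration explicit.
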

\newcommand{\tp}{{{{\mathrm{top}}}}}
\begin{proofsketch}
  Let $E = (C,V)$ be our input election, 
  and let $D_1, \ldots, D_s$ be a partition of $C$
  witnessing that the single-crossing width of $E$ is at most $w$; 
  assume that the order of the
  sets $D_1, \ldots, D_s$ is such that the preference order of the first voter in $V$
  is of the form $D_1 \succ D_2 \succ \dots \succ D_s$.
  We first observe that Lemma~\ref{lem:cc-blocks} generalizes easily
  to elections with a given partition into clone sets.
  Specifically, there exists an optimal
  $k$-CC assignment $\Phi$ for $E$ under $\alpha$-$\ell$-CC, where for
  each clone set $D_i$, if $\Phi^{-1}(D_i) \neq \emptyset$ (that is,
  if at least one candidate from $D_i$ is assigned to some voter)
  then:
   (a) there are two integers, $t_i$ and $t_i'$, $t_i \leq
    t_i'$, such that $\Phi^{-1}(D_i) = \{v_{t_i}, v_{t_i+1}, \ldots,
    v_{t_i'}\}$, and
   (b) for each $i < j$ such that $\Phi^{-1}(D_i) \neq
    \emptyset$ and $\Phi^{-1}(D_j) \neq \emptyset$ it holds that $t'_i
    < t_j$.
  That is, the voters matched to the candidates from a given clone set
  form a consecutive block within the voter order.

  Now it is easy to modify our dynamic programming algorithm for
  profiles with bounded single-crossing width. We guess an integer
  $j\in[s]$, a subset $D'_j$ of $D_j$ (the candidates
  from $D_j$ to join the assembly), and a voter $v_i\neq v_n$ such that 
  voters $v_{i+1}, \dots, v_n$ are represented by
  the candidates from $D'_j$. Note that assigning the candidates from
  $D'_j$ to these voters optimally is easy under Chamberlin--Courant's
  rule: each voter gets her most preferred candidate from
  $D'_j$. An optimal representation for $v_1, \dots, v_i$ is
  found recursively (for an appropriately smaller assembly).
  To implement guessing, we try all possible choices of
  $j$, all possible subsets of $D_j$, and all possible choices of $v_i$,
  and we use dynamic programming to implement the recursive
  calls efficiently, just as in the perfectly single-crossing case. 
  Since there are only $s2^w\|V\|$ possibilities to consider at each guessing step
  and $s \leq \|C\|$, we obtain the desired bound on the running time.
\end{proofsketch}

Naturally, for this result to be useful, we need an efficient
algorithm that computes single-crossing width of a profile and an
appropriate division into clone sets. Fortunately, such an algorithm
is provided by Cornaz et
al.~\cite{cor-gal-spa:c:spsc-width}. (Interestingly, a very similar
problem of finding a division into clones that results in a
single-crossing election with as many candidates as possible is
$\np$-hard~\cite{elk-fal-sli:c:decloning}).  As a consequence, we have
the following corollary (see the
books~\cite{nie:b:invitation-fpt,dow-fel:b:parameterized} for an
introduction to fixed-parameter complexity theory).

\begin{corollary}
  For every family $\alpha$ of polynomial-time computable
  dissatisfaction functions and for every $\ell \in
  \{\ell_1,\ell_\infty\}$, the problem of winner determination for
  $\alpha$-$\ell$-CC is fixed-parameter tractable with respect to
  the single-crossing width of the input profile.
\end{corollary}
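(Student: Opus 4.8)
The plan is to obtain the corollary by composing Proposition~\ref{pro:cc-sc-w} with a preprocessing step that actually computes the parameter. Recall that a problem is fixed-parameter tractable with respect to a parameter $w$ if it admits an algorithm running in time $f(w)\cdot\poly(N)$, where $N$ is the input size and $f$ is an arbitrary computable function. Proposition~\ref{pro:cc-sc-w} already almost gives this, but it assumes that a witnessing partition of $C$ into clone sets is supplied as part of the input; the winner-determination problem, on the other hand, receives only the election $E=(C,V)$ and the target size $k$. Hence the whole content of the corollary is to show that this partition can itself be produced efficiently, and then to bookkeep the running time.

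First I would run the algorithm of Cornaz et al.~\cite{cor-gal-spa:c:spsc-width} on $E$ to compute its single-crossing width $w$ together with a partition $D_1,\dots,D_s$ of $C$ into clone sets of size at most $w$ whose contraction yields a single-crossing profile. This phase runs in time polynomial in $\|C\|$ and $\|V\|$ and has no superpolynomial dependence on $w$. I would then hand $E$, $k$, and the computed partition to the algorithm guaranteed by Proposition~\ref{pro:cc-sc-w}, which returns an optimal $k$-CC assignment under $\alpha$-$\ell$-CC in time $\poly(m,n,k,2^w)$. It remains to verify that the total running time has the FPT shape: the second phase contributes $\poly(m,n,k,2^w)$, in which $2^w$ enters only polynomially, so this quantity can be rewritten as $2^{O(w)}\cdot\poly(m,n,k)=f(w)\cdot\poly(N)$ with $f(w)=2^{O(w)}$; adding the polynomial cost of the first phase preserves this form.

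The point deserving care—and the natural candidate for the main obstacle—is the efficiency of the preprocessing step in the parameter: if computing the width together with a clone partition required time exponential in $m$ or $n$, the composition would not be FPT. Here this is not a problem, since the algorithm of Cornaz et al. runs in polynomial time irrespective of $w$, and all of the genuine parameter dependence is confined to the $2^w$ factor incurred in Proposition~\ref{pro:cc-sc-w} when enumerating subsets of the clone sets. I note in passing that one must use an algorithm computing a \emph{width-minimizing} clone partition rather than, say, trying to declone the election into a single-crossing profile with as many candidates as possible, as the latter problem is $\np$-hard~\cite{elk-fal-sli:c:decloning}; fortunately the former suffices, and it is exactly what \cite{cor-gal-spa:c:spsc-width} provides.
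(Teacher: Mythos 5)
Your proof is correct and follows essentially the same route as the paper: invoke the polynomial-time algorithm of Cornaz et al.~\cite{cor-gal-spa:c:spsc-width} to compute the single-crossing width and a witnessing clone partition, then feed these into Proposition~\ref{pro:cc-sc-w} and observe that $\poly(m,n,k,2^w)$ has the required $f(w)\cdot\poly(N)$ form. Even your side remark contrasting this with the $\np$-hard decloning problem of~\cite{elk-fal-sli:c:decloning} mirrors the paper's own parenthetical observation.
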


\section{Monroe's Rule}\label{sec:m}

The results of Betzler et al.~\cite{bet-sli-uhl:j:mon-cc} suggest
that winner determination under Monroe's rule tends to be harder
than winner determination under Chamberlin--Courant's rule.
In this section, we show that this is also the case 
for single-crossing profiles: we prove 
that for the utilitarian variant of Monroe's rule with Borda dissatisfaction
function (perhaps the most natural variant of Monroe's rule) computing
winners is $\np$-hard, even for single-crossing elections.
We then complement this hardness result by showing that for the egalitarian version
of Monroe's rule winner determination is easy if we additionally
assume that the preferences are narcissistic.

\subsection{Hardness for General Single-Crossing Profiles}

This section is devoted to proving that winner determination under
Monroe's rule is NP-hard. The main idea of the proof is to reduce the
problem of winner determination for unrestricted profiles to the case
of single-crossing profiles.

\begin{theorem}\label{thm:monroe-sc-np}
  Finding a set of winners under $\alphaborda$-$\ell_{1}$-Monroe voting rule
  is $\np$-hard, even for single-crossing elections.
\end{theorem}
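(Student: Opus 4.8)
The plan is to give a polynomial-time many-one reduction from winner determination for $\alphaborda$-$\ell_1$-Monroe on \emph{unrestricted} elections---known to be $\np$-hard---to the same problem on single-crossing elections. One cannot simply embed an arbitrary input profile as a subelection and pad it into single-crossing shape: the restriction of a single-crossing profile to any candidate subset is again single-crossing (as already noted when the algorithm of Theorem~\ref{thm:cc-easy} recurses on subelections), so the input would itself have to be single-crossing. The reduction must therefore operate at the level of \emph{costs}: from an arbitrary instance $(E,k)$ with $E=(C,V)$ I would build a single-crossing instance $(E',k')$ whose optimal Monroe cost equals that of $(E,k)$ up to a fixed, efficiently computable additive offset, so that solving $E'$ solves $E$.

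\textbf{Construction.} First I would introduce a family of auxiliary (\emph{dummy}) candidates whose only role is to impose a single-crossing backbone on a fixed linear order $v'_1, \ldots, v'_{n'}$ of the (possibly duplicated) voters: the dummies are arranged in a staircase pattern, each ranked above a prefix of the voters and below the complementary suffix, so that every pair of dummies crosses exactly once as the index increases. To place the real candidates of $C$ without violating single-crossing---which forbids any pair from crossing twice---I would use several \emph{copies} of each $c\in C$, one per region of the voter order, slotting each copy into the ballots so that (a) every pair of candidates, real or dummy, crosses at most once along the order, making $E'$ single-crossing, and (b) the Borda cost $\pos_{v'_i}(\cdot)-1$ of assigning $v'_i$ to the relevant copy reproduces the original misrepresentation $\pos_{v_j}(c)-1$ of the voter $v_j$ that $v'_i$ encodes. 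Finally I would tune $n'$ and $k'$ so that the Monroe balance requirement $\lfloor n'/k'\rfloor \le \|\Phi^{-1}(c)\| \le \lceil n'/k'\rceil$ forces every committee to use exactly one consistent copy per selected original candidate, mirroring a feasible Monroe committee for $E$.

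\textbf{Correctness.} I would prove both inequalities between the two optima. In the forward direction, any $k$-Monroe assignment for $E$ lifts to a feasible, balanced assignment for $E'$ that uses only real-candidate copies and incurs the same cost plus the fixed offset. In the reverse direction, the dummies must be made \emph{dominated}---either ranked so low by every voter, or provided in too few copies to meet the lower capacity bound $\lfloor n'/k'\rfloor$---so that no optimal assignment selects one; the region/copy structure together with the balance constraint must then force the chosen copies of a given $c$ to serve exactly the voters encoding $c$'s original constituency. Projecting such an optimal assignment back onto $C$ yields a feasible $k$-Monroe assignment for $E$ of the corresponding cost, closing the loop.

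\textbf{Main obstacle.} The crux is reconciling two competing demands. Single-crossing is a global rigidity condition---at most one crossing per pair along a single fixed voter order---so the gadget must linearize an essentially arbitrary, highly non-single-crossing cost structure using only dummies and candidate copies; verifying that the constructed profile really is single-crossing is the most delicate piece of bookkeeping. Compounding this, Monroe's near-equal capacity constraint must be calibrated so that it \emph{enforces} the intended copy selection rather than opening up cheaper, unintended balanced assignments. Ruling out such shortcuts, and confirming that dummy candidates never enter an optimal solution, is the step I expect to require the most care.
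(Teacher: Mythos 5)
Your top-level strategy matches the paper's: a polynomial-time reduction from winner determination for $\alphaborda$-$\ell_1$-Monroe on unrestricted elections that preserves the optimal cost up to a fixed additive offset, using padding candidates and a forcing argument built around Monroe's capacity constraint. Your opening observation is also correct and important: since restricting a single-crossing profile to any candidate subset stays single-crossing, the original profile cannot appear as a subelection, so the reduction must work at the level of costs. However, your central gadget---several \emph{copies} of each real candidate, one per region of the voter order---is where the proposal breaks down, and it is exactly the problem the paper's construction is designed to avoid. With copies, nothing forces the selected copies in different regions to correspond to a common set of $k$ original candidates: Monroe's balance condition constrains the \emph{sizes} of constituencies, not the \emph{identities} of the selected candidates, so an optimal assignment for $E'$ may select a copy of $c$ in one region and a copy of a different candidate $d$ in another, and such a solution projects back to no feasible $k$-committee for $E$. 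The forward direction is equally problematic: an original candidate's Monroe constituency is an arbitrary, non-contiguous set of roughly $n/k$ voters, so it straddles regions; lifting it requires selecting several copies of the same candidate, which wrecks both the committee-size bookkeeping and the ``same cost plus fixed offset'' accounting, since the number of copies used depends on how each constituency happens to split across regions. You flag this consistency enforcement as the main obstacle, but it is not a technicality to be checked---it is the missing idea, and there is no apparent way to implement it with the balance constraint alone.

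The paper's solution uses no copies at all. Each original candidate $c_i$ appears exactly once, embedded in its own dedicated window of $2mn$ fresh padding candidates via the $\mathrm{Adj}(A,c_i,B)$ gadget: across the $n$ voters that encode the original profile, $c_i$ \emph{slides} within its window so that its position equals $mn$ plus its original position, while the padding candidates keep a fixed relative order. Because each original candidate moves only inside its own window, any two original candidates never cross each other (their relative order is constant in the encoding votes), and each original candidate crosses only its own padding candidates, each at most once---this is how an arbitrary cost structure is linearized into a single-crossing profile. The rest of the construction (the sets $H$, $E_i$, $D_i$, $G$, the rotation profiles of Lemma~\ref{lem:all-m}, and the carefully sized voter blocks $V_1,\dots,V_5$) then makes every padding candidate cheaply absorbable by dedicated voters, forces exactly $k$ original candidates into any optimal solution, and---via the $i$-dependent sizes of the sets $E_i$ interacting with the votes in $V_3$---cancels the window-position offsets so that the surcharge is independent of \emph{which} $k$ originals are chosen. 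Without this window/sliding idea (or an alternative mechanism of comparable strength), your reduction cannot be completed as described.
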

The proof of this theorem is somewhat involved. 
We first need the following two lemmas.

\begin{lemma}
  Consider an election $E = (C,V)$ with $C = \{c_1, \ldots, c_m\}$, 
  $V = (v_1, \ldots, v_n)$.  Let $A$ and $B$ be two disjoint sets of
  candidates such that $\|A\| = \|B\| = mn$.  For each $c_i \in C$,
  there is a single-crossing election $\mathrm{Adj}_V(A, c_i, B)$ with
  candidate set $A \cup B \cup \{c_i\}$ and voter list $V' = (v'_1,
  \ldots, v'_n)$ such that
$pos_{v_{j}'}(c_i) = mn + pos_{v_{j}}(c_i)$ for each $j\in [n]$, and
the profile $(v'_0, v'_1, \ldots, v'_n, v'_{n+1})$, where $v'_0$
    has preference order $a_1 \succ \dots \succ a_{\|A\|} \succ c_i
    \succ b_1 \succ \dots \succ b_{\|B\|}$ and $v'_{n+1}$ has
    preference order $b_1 \succ \dots \succ b_{\|B\|} \succ c_i \succ
    a_1 \succ \dots \succ a_{\|A\|}$, is also single-crossing.
\end{lemma}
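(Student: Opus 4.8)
The plan is to build each voter $v'_j$ of $V'$ explicitly, arranging the padding candidates of $A$ and $B$ into contiguous blocks around $c_i$ and deciding, for every $j$, how many candidates of each set sit above $c_i$. Write $p_j = \pos_{v_j}(c_i)$, and let $\alpha_j$ and $\beta_j$ be the number of $A$-candidates and $B$-candidates I place above $c_i$ in $v'_j$. I freeze the internal orders of $A$ and $B$ as $a_1\succ\cdots\succ a_{mn}$ and $b_1\succ\cdots\succ b_{mn}$ throughout, so the candidates raised above $c_i$ are always the top $\alpha_j$ of $A$ and the top $\beta_j$ of $B$. Concretely I take
\[
 v'_j = b_1 \succ \cdots \succ b_{\beta_j} \succ a_1 \succ \cdots \succ a_{\alpha_j} \succ c_i \succ a_{\alpha_j+1} \succ \cdots \succ a_{mn} \succ b_{\beta_j+1} \succ \cdots \succ b_{mn},
\]
so above $c_i$ the raised $B$-candidates sit on top of the surviving $A$-candidates, while below $c_i$ the dropped $A$-candidates sit on top of the dormant $B$-candidates. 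Setting $(\alpha_0,\beta_0)=(mn,0)$ and $(\alpha_{n+1},\beta_{n+1})=(0,mn)$ recovers exactly the two sentinel voters $v'_0$ and $v'_{n+1}$ prescribed in the statement.

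Since $c_i$ has $\alpha_j+\beta_j$ candidates above it, property (1) forces the design equation $\alpha_j+\beta_j = mn+p_j-1$ for $j\in[n]$; to keep the profile single-crossing I will additionally require $\alpha_j$ non-increasing and $\beta_j$ non-decreasing, so that once an $A$-candidate falls below $c_i$ it stays there and once a $B$-candidate is raised it stays up. The main obstacle is that the target sums $s_j := mn+p_j-1$ (with $s_0=s_{n+1}=mn$) need \emph{not} be monotone in $j$, because the $p_j$ come from an arbitrary profile, whereas $\alpha_j,\beta_j$ must be monotone. I will resolve this by setting $\beta_j = \sum_{i=1}^{j}\max(0,\,s_i-s_{i-1})$ for $j\le n$, then $\beta_{n+1}=mn$, and $\alpha_j = s_j-\beta_j$. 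A direct check shows $\beta_j$ is non-decreasing and $\alpha_{j+1}-\alpha_j = \min(0,s_{j+1}-s_j)\le 0$, so $\alpha_j$ is non-increasing. The one quantitative point—and the step I expect to need the most care—is that this reaches $\beta_{n+1}=mn$ without $\beta$ ever exceeding $mn$: because $s_0=s_{n+1}=mn$, the total positive and negative variations of $s_0,\dots,s_{n+1}$ coincide, so $\beta_n$ is at most $\tfrac12\sum_j|s_{j+1}-s_j|\le \tfrac12(n+1)(m-1)\le mn$, the last inequality being equivalent to $m-n-1\le mn$. This is precisely where the hypothesis $\|A\|=\|B\|=mn$ is used.

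With the construction fixed, I verify single-crossing of $(v'_0,\dots,v'_{n+1})$ pair by pair against the first voter $v'_0 = a_1\succ\cdots\succ a_{mn}\succ c_i\succ b_1\succ\cdots\succ b_{mn}$. Pairs inside $A$ or inside $B$ never swap, since their internal orders are frozen. A pair $(a_k,c_i)$ satisfies $a_k\succ_{v'_j}c_i$ exactly when $\alpha_j\ge k$, and a pair $(c_i,b_l)$ satisfies $c_i\succ_{v'_j}b_l$ exactly when $\beta_j<l$; monotonicity of $\alpha_j$ and $\beta_j$ makes each of these agreement sets a prefix, i.e.\ a single crossing. The key observation that trivializes the cross pairs $(a_k,b_l)$ is that, under my block arrangement, $a_k\succ_{v'_j}b_l$ holds if and only if $\beta_j<l$, independently of $k$ and of whether either candidate currently lies above or below $c_i$: the four position cases all collapse to this one inequality, because raised $B$'s always dominate surviving $A$'s above $c_i$, while dropped $A$'s always dominate dormant $B$'s below it. Since $\beta_j$ is non-decreasing, $\{j : a_k\succ_{v'_j} b_l\}$ is again a prefix, so every cross pair crosses at most once. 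This establishes property (2); restricting to $(v'_1,\dots,v'_n)$ keeps the profile single-crossing (sublists inherit the property), so $\mathrm{Adj}_V(A,c_i,B)$ is single-crossing, and reading off $\pos_{v'_j}(c_i)=\alpha_j+\beta_j+1 = mn+p_j$ gives property (1).
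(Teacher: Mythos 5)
Your proof is correct and is essentially the paper's own construction: the paper also pads each vote with a monotone prefix of $A$ above $c_i$ (non-increasing) and a monotone prefix of $B$ above $c_i$ (non-decreasing), updating them incrementally by $d=\pos_{v_{j+1}}(c_i)-\pos_{v_j}(c_i)$, which yields exactly your sequences $\alpha_j,\beta_j$ in delta form rather than as cumulative positive variation. The only cosmetic differences are your mirror-image block layout within each vote and your more explicit pair-by-pair single-crossing verification and quantitative bound, which the paper largely asserts.
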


\begin{proof}
Set $A = \{a_1, \dots, a_{mn}\}$ and $B = \{b_1, \dots, b_{mn}\}$. 
Fix a candidate $c_i \in C$. We build the election
$\mathrm{Adj}_V(A, c_i, B)$ as follows.
We set $v'_1$'s preference order to be
\begin{align*}
a_1 \succ a_2  \succ \dots &\succ a_{mn} \succ b_1 \succ b_2 \succ \dots \\&\succ b_{pos_{v_1}(i) - 1} \succ  c_i \succ  b_{pos_{v_1}(i)} \succ \dots \succ b_{mn}.
\end{align*}
For $1\le j\le n-1$, we build the preference order of voter $v_{j+1}$ based on the
preference order already constructed for $v_j$. Given that the preference order of $v_j$ is of the form
\begin{align*}
a_1 \succ a_2 \succ \dots &\succ a_{x} \succ b_1 \succ b_2 \succ \dots \succ  b_{y} \succ c_i \\&\succ  b_{y+1} \succ \dots \succ b_{mn} \succ a_{x+1} \succ \dots \succ a_{mn},
\end{align*}
we construct the preference order of $v_{j+1}$ either by moving some of the candidates from $B$ to precede $c_i$ or
by moving some of the candidate from $A$ to follow $c_i$.
Specifically, we do the following.
First, we compute $d = pos_{v_{j+1}}(i) - pos_{v_{j}}(i)$; note that $-m < d < m$. If $d \geq 0$ then
we set $v_{j+1}$'s preference order to be
\begin{align*}
a_1 \succ a_2 \succ \dots &\succ a_{x} \succ b_1 \succ b_2 \succ \dots \succ  b_{y+d} \succ c_i \\&\succ  b_{y+d+1} \succ \dots \succ b_{mn} \succ a_{x+1} \succ \dots \succ a_{mn}.
\end{align*}
If $d < 0$ then we set $v_{j+1}$'s preference order to be
\begin{align*}
a_1 \succ a_2 \succ \dots &\succ a_{x-d} \succ b_1 \succ b_2 \succ \dots \succ  b_{y} \succ c_i \\&\succ  b_{y+1} \succ \dots \succ b_{mn} \succ a_{x-d} \succ \dots \succ a_{mn}.
\end{align*}
If $d=0$ then we set $v_{j+1}$'s preference order to be the same as $v_j$'s.
Clearly, to construct each vote it suffices to shift forward or backward a block of at most $m$ candidates
and since both $A$ and $B$ contain $mn$ candidates, doing so is always possible. Finally, it is clear that
we never change the relative order of the candidates within $A$ and within $B$, and that the resulting
profile is single-crossing, even if we prepend $v'_0$ and append $v'_{n+1}$ to it.
\end{proof}

\begin{lemma}\label{lem:all-m}
  For every pair of positive integers $k, n$ such that
  $k$ divides $n$, and every set $C = \{c_1, \ldots, c_m\}$ of
  candidates, there is a single-crossing profile $\rot(C)$ with
  $(\frac{n}{k} + 1)m$ voters such that each candidate $c_{i} \in C$
  is ranked first by exactly $(\frac{n}{k} + 1)$ voters.
\end{lemma}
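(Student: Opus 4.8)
The plan is to build $\rot(C)$ as a \emph{skeleton} of $m$ distinct single-crossing votes in which candidate $c_i$ sits on top of the $i$-th vote, and then to replicate each skeleton vote $p := \frac{n}{k}+1$ times consecutively. Since $k \mid n$, the quantity $p$ is a positive integer; the total number of voters is then $pm = (\frac{n}{k}+1)m$ as required, and each candidate is ranked first by exactly the $p$ copies of its skeleton vote. So the whole task reduces to exhibiting one $m$-vote profile that is single-crossing and whose $i$-th vote tops $c_i$, and then checking that consecutive replication preserves the single-crossing property.

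For the skeleton I would take, for each $i\in[m]$, the vote
\[
w_i:\quad c_i \succ c_{i-1} \succ \cdots \succ c_1 \succ c_{i+1} \succ c_{i+2} \succ \cdots \succ c_m,
\]
that is, $c_i$ on top, followed by the lower-indexed candidates in decreasing order and then the higher-indexed candidates in increasing order. Note that $w_1$ is the order $c_1 \succ \cdots \succ c_m$ and $w_m$ is its reverse $c_m \succ \cdots \succ c_1$; this is why I picture $\rot(C)$ as sweeping the top candidate from $c_1$ to $c_m$. By construction $c_i$ is ranked first exactly in $w_i$, so each candidate heads exactly one skeleton vote.

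The key step is verifying that $(w_1, \dots, w_m)$ is single-crossing. For a pair $c_a, c_b$ with $a < b$ I would locate $c_a$ and $c_b$ within the three blocks of $w_i$ (the top element, the decreasing prefix $c_i, \dots, c_1$, and the increasing suffix $c_{i+1}, \dots, c_m$): a short case analysis on whether each of $a, b$ is $\le i$ or $> i$ shows that $c_a \succ_{w_i} c_b$ holds precisely when $i \le b-1$, so $\{\, i : c_a \succ_{w_i} c_b \,\} = \{w_1, \dots, w_{b-1}\}$ is a prefix of the voter order, which is exactly the single-crossing condition. Replacing each $w_i$ by $p$ consecutive identical copies only rescales each such prefix (from length $b-1$ to length $(b-1)p$), so it remains a prefix and the replicated profile stays single-crossing. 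This is where consecutiveness of the copies is essential: the naive alternative of concatenating $p$ copies of the entire sequence $(w_1, \dots, w_m)$ would force a pair to cross back and so fails single-crossing. I expect the only real care in the argument to be this case analysis together with the observation that the replication must be done block-wise; beyond that bookkeeping there is no genuine obstacle, as the lemma is a pure construction.
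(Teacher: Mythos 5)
Your proof is correct and takes essentially the same approach as the paper: the paper likewise builds $m$ blocks of $\frac{n}{k}+1$ identical consecutive votes, with $c_i$ topping the $i$-th block, using the skeleton $c_i \succ c_{i+1} \succ \cdots \succ c_m \succ c_{i-1} \succ \cdots \succ c_1$, which is exactly your skeleton under reversal of the candidate indexing. The only difference is that you verify the single-crossing property explicitly (via the prefix $\{1,\ldots,b-1\}$ computation and the observation that block-wise replication rescales prefixes), whereas the paper simply asserts it is clear; both constructions also share the feature, needed later in the reduction, that the last vote is the reverse of the first.
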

\begin{proof}
We build a list $V = V_1 + \cdots + V_m$ of voters, where each $V_i$, $i\in[m]$,
contains $\frac{n}{k} + 1$ voters. The preference order of each voter in $V_i$ is
$
  c_i \succ c_{i+1} \succ \cdots c_{m} \succ c_{i-1} \succ \cdots \succ c_2 \succ c_1.
$
It is clear that a thus-constructed profile is single-crossing.
Note that the preference order of the last voter in this profile is the
reverse of the preference order of the first voter.
\end{proof}

We extend the notation introduced in Lemma~\ref{lem:all-m} to apply to orders
of candidates.  That is, if $\ordset{C}$ is an order of candidates in
$C$, then by $\rot(\ordset{C})$ we denote the election that we would
construct in Lemma~\ref{lem:all-m} if the first voter's preference
order was $\ordset{C}$ (i.e., if we took $c_1$ to be the top
candidate according to $\ordset{C}$, $c_2$ to be the second one, and
so on).  By $\rot^{-1}(\ordset{C})$ we denote an order $\ordset{C'}$ of
the candidates in $C$ such that $\rot(\ordset{C'})$ produces an
election where the last voter has preference order $\ordset{C}$. 
With these lemmas and notation available, we are ready to give our
proof of Theorem~\ref{thm:monroe-sc-np}.\medskip

\newcommand{\msucc}{\ }

\begin{table}[t!] 
\centering
  \setlength{\tabcolsep}{0pt}
  \begin{tabular}{lll}
  $V_{1}:$&$ H \msucc \rot^{-1}(F_1 \msucc \dots \msucc F_m \msucc E \msucc E_m \msucc \dots \msucc E_{1})\msucc c_1 \msucc \dots \msucc c_m \msucc  D_1 \msucc \dots \msucc D_m \msucc G_1 \dots G_m \msucc G$ \\
  $V_{2}:$&$ H \msucc \rot(\rot^{-1}(F_1 \msucc \dots \msucc F_m \msucc E \msucc E_m \msucc \dots \msucc E_{1})) \msucc c_1 \msucc \dots \msucc c_m \msucc D_1 \msucc \dots \msucc D_m \msucc G_1 \dots G_m \msucc G $\\ \vspace{-1.5mm} \\
  $v_3^1:$&$ H \msucc  F_1  \cdots  F_m \msucc E \msucc E_m \msucc \dots \msucc E_{2}  \msucc c_1 \msucc E_1 \msucc c_2  \msucc \dots \msucc c_m \msucc  D_1 \msucc \dots \msucc D_m \msucc G_1 \msucc \dots \msucc G_m \msucc G$ \\
  $v_{3}^{2}:$&$ H \msucc F_1  \cdots  F_m \msucc E \msucc D_1  \msucc E_m \msucc \dots \msucc E_{3} \msucc c_2 \msucc E_{2}  \msucc c_3 \msucc \dots \msucc c_m \msucc c_1 \msucc E_1 \msucc D_2 \msucc \dots \msucc  D_m \msucc  G_1 \msucc \dots \msucc G_m \msucc G $\\
  & \quad $\vdots$ \\
  $v_{3}^{m}:$&$ H \msucc F_1  \cdots  F_m \msucc E \msucc D_1 \msucc \dots \msucc D_{m-1} \msucc c_m  \msucc E_{m} \msucc c_{m-1} \msucc \dots \msucc c_1 \msucc E_{m-1} \msucc \dots E_{1}  \msucc D_m \msucc G_1 \msucc \dots \msucc G_m \msucc G $\\ \vspace{-1.5mm} \\
  $V_{4}:$&$ H \msucc D_1 \msucc \dots \msucc  D_m \msucc \mathrm{Adj}(F_1, c_m, G_1) \msucc \dots   \msucc \mathrm{Adj}(F_m, c_1, G_m)  \msucc E \msucc E_m \msucc \dots \msucc E_{1} \msucc G$ \\
  $V_{5}:$&$ H \msucc \rot(D_1 \msucc \dots \msucc  D_m \msucc  G_1 \msucc \dots \msucc G_m \msucc  G) \msucc c_m \msucc \dots \msucc c_1 \msucc F_1 \msucc \dots \msucc F_{m} \msucc E \msucc E_m \msucc \dots \msucc E_{1}$
  \end{tabular}
  \caption{\label{tab:monroe-sc-np}The profile used in the proof of Theorem~\ref{thm:monroe-sc-np}.
  For each voter list $V_i$, $1 \leq i \leq 5$, and for each voter $v$ in $V_i$ we list the (sets of) candidates in the
  order of $v$'s preference (we omit the ``$\succ$'' symbol for readability).
    Whenever we list a set of candidates as a part of an order, we assume that the candidates in this set are ordered in some
  fixed, easily-computable way (for candidates in $H$ we fix this order to be $h_1 \succ \dots \succ h_{m-k}$).
  Further, when in a line describing a preference order of an entire collection of voters $V_r = (v_1, \ldots, v_s)$
  (specifically, for us $r$ is either $2$, $4$, or $5$, and $s$ is the number of voters in this list of voters) 
  we include a profile $V' = (v'_1, \ldots, v'_s)$ (in our
  case $V'$ is either an $\rot$-profile or an $\mathrm{Adj}$-profile), then
  we mean that for each voter $v_i$, $i\in[s]$, in $V_r$, 
  this part of this voter's preference order is the preference order of $v'_i$ in $V'$.
  }
  \end{table}

\begin{proofof}{Proof of Theorem~\ref{thm:monroe-sc-np}}
  Let $I$ be an instance of the problem of finding $k$ winners under
  $\alphaborda$-$\ell_{1}$-Monroe rule, and let $(C,V)$ be the election
  considered in $I$.  Set $n = \|V\|$ and $m = \|C\|$. We assume that
  $n$ is divisible by $k$ and that $n > k$ (computing
  $\alphaborda$-$\ell_1$-Monroe winners is still $\np$-hard under
  these
  assumptions~\cite{bet-sli-uhl:j:mon-cc,sko-fal-sli:c:multiwinner}).
  We will show how to construct in polynomial time an instance
  $I_{sc}$ of the problem of finding winners under
  $\alphaborda$-$\ell_{1}$-Monroe where the election is
  single-crossing so that it is easy to extract the set of
  winners for $I$ from the set of winners for $I_{sc}$.

  We construct $I_{sc}$ in the following way.  First, we define the
  candidate set $C_{sc}$ to be the union of the following disjoint
  sets (we provide names of the candidates only where relevant
  and abbreviate $\sum_{i=1}^m$ to $\sum_i$):
  \begin{enumerate}\fixlist
  \item 
  $H = \{h_1, \ldots, h_{m-k}\}$, where $\|H\| = m-k$;
  \item 
  $F_1, \dots F_m$, where $\|F_i\| = mn$ for each $i\in[m]$;
  \item 
  $E_1, \dots E_m$, where $\|E_i\| = 2m^2n + m + (m-i)(2mn+1)\frac{n}{k}$ for each $i\in [m]$;
  \item $E$, where $\|E\| = m^2n + m$;
  \item $D_1, \dots, D_m$, where $\|D_i\| = \|E_i\|$ for each $i\in [m]$;
  \item $G_1, \dots, G_m$, where $\|G_i\| = \|F_i\| = mn$ for each $i\in [m]$;
  \item $G$, where $\|G\| = (\sum_i \|F_i\| + \|E\|)$;
  \item $C' = C = \{c_1, \dots, c_m\}$.
  \end{enumerate}
  The ordered list $V_{sc}$ of voters consists of the following five
  sublists (we only give names to those voters to whom we will refer
  directly later; whenever sufficient, we only give the number of
  voters in a given list):
  \begin{enumerate}\fixlist
  \item $V_1$, $\|V_1\| = \|H\|\frac{n}{k} = (m-k)\frac{n}{k}$;
  \item $V_2$, $\|V_2\| = (\sum_{i}\|F_i\| + \sum_{i}\|E_i\| +
    \|E\|)(\frac{n}{k} + 1)$;
  \item $V_3 = (v_{3}^1, \ldots, v_{3}^m)$, $\|V_3\| = m$;
  \item $V_4$, $\|V_4\| = n$;
  \item $V_5$, $\|V_5\| = (\sum_{i}\|D_i\| + \sum_{i}\|G_i\| +
    \|G\|)(\frac{n}{k} + 1)$.
  \end{enumerate}
  We give the preferences of the voters in
  Table~\ref{tab:monroe-sc-np}.  In the thus-defined profile our goal
  is to find a parliament of size $k_{sc} = \|C_{sc}\| - (m-k)$.
  Consequently, each selected candidate should be assigned to
  $\frac{n}{k} + 1$ voters.

  We claim that optimal solutions $\Phi_{sc}$ for $I_{sc}$ satisfy the
  following conditions:
  \begin{enumerate}\fixlist
  \item[(i)] Each candidate $c \in F_1 \cup \dots \cup F_m \cup E \cup
    E_m \cup \dots \cup E_{1}$ is a winner and is assigned to those
    voters from $V_1+V_2$ that rank $c$ in position $\|H\| + 1=
    m-k+1$ (note that only one of these candidates can
    be assigned to (some of the) voters in $V_1$).
  \item[(ii)] Each candidate $c \in D_1 \cup\dots \cup D_m \cup G_1
    \cup\dots\cup G_m \cup G$ is a winner and is assigned to those
    voters from $V_5$ that rank $c$ in position $\|H\| + 1 = m-k+1$.
  \item[(iii)] Each candidate $h_{i} \in H$ is a winner and is
    assigned to $\frac{n}{k}+1$ voters from $V_1+V_2+V_3$ (exactly
    $\|H\|$ voters from $V_3$ have some candidate from $H$ assigned to
    them); each such voter ranks $h_{i}$ in position $i$.
  \item[(iv)] Exactly $k$ candidates from $C'$ are winners. Each of
    them is assigned to $\frac{n}{k}$ voters in $V_4$ and to one voter
    in $V_3$ that ranks him highest.
  \item[(v)] The $k$ winners from $C'$ (let us call them $w_1, \ldots,
    w_k$) are also $\alphaborda$-$\ell_1$-Monroe winners in $I$ and
    each of them is assigned in $I_{sc}$ to the voters corresponding
    to those from the $I$-solution.
  \end{enumerate}

  Let us now show that indeed the optimal solution is of this form.
  First, we make the following observations:
  \begin{enumerate}\fixlist
  \item[(a)] By a simple counting argument, at least $k$ of the
    candidates from $C'$ must be included in the optimal solution.

  \item[(b)] For each candidate $h_i$ in $H$, if $h_i$ is part of the
    optimal solution then $h_i$ is ranked in the $i$-th position in
    the preference order of the voters to which $h_i$ is assigned
    (candidates from $H$ are always ranked first, in the order $h_1
    \succ \dots \succ h_m$).

  \item[(c)] For each candidate $c \in C_{sc} \setminus (C' \cup H)$, if $c$
    is included in the optimal solution then each voter to which $c$
    is assigned ranks $c$ in position $m-k+1$ or worse (this is
    because every voter's top $m-k$ positions are taken by the
    candidates from $H$).

  \item[(d)] Each voter in $V_1+V_2+V_5$ ranks each candidate in $C'$
    position worse than \[p_1 = \|H\| + \sum_i \|E_i\| + \sum_i
    \|F_i\| + \|E\| > \|H\| + \sum_i \|E_i\| + 2m^2n + m.\]

  \item[(e)] Each voter in $V_4$ ranks each candidate in $C'$ in
    position better than \[p_2 = \|H\| + \sum_i \|D_i\| + \sum_i
    \|F_i\| + \sum_i \|G_i\| + m = \|H\| + \sum_i \|E_i\| + 2m^2n +
    m,\] 
   but worse than $p_3 = \|H\| + \sum_i \|E_i\|$.

  \item[(f)] $p_1 > p_2$.

  \item[(g)] For each candidate $c \in C'$, there is exactly one voter
    in $V_3$ that ranks $c$ in a position no worse than
    \begin{align*}
      p_4 &= \|H\| + \sum_i\|F_i\| + \sum_{i<m}\|E_i\| + \|E\| + 1 \\
          &= \|H\| + \sum_{i<m}\|E_i\| + \|E\| + m^2n + 1 \\
          &< \|H\| + \sum_{i}\|E_i\| = p_3;
    \end{align*}
    all other voters in $V_3$ rank $c$ in a position worse
    than \[p_5 = \|H\| + \sum_i \|E_i\| + \sum_i \|F_i\| + \|E\| =
    \|H\| + \sum_i \|E_i\| + 2m^2n + m =p_2.\]
  \end{enumerate}

  Let $\Phi$ be an optimal assignment function among those that use
  exactly $k$ candidates from $C'$. We claim that $\Phi$ satisfies
  conditions (i)--(iv).
  This is so, because assigning voters from $V_4$ to candidates
  other than those in $C'$ will result in a strictly worse assignment (the
  assignment would get worse for the candidates in $C'$ because of
  points (d), (e), (f) and (g), and it would not improve for the other
  candidates because of points (b) and (c)). Similarly, each of the $k$
  selected candidates from $C'$ should be assigned to exactly one
  voter from $V_3$---the one that ranks this candidate highest. Once we assign
  the $k$ winners from $C'$ to the voters in $V_4$ and to $k$ voters
  in $V_3$, the optimal way to complete the assignment is to do so as
  described in conditions (i)--(iv).

  Let $\Phi$ be an optimal assignment function for $I_{sc}$ that uses
  exactly $k$ candidates from $C'$ and that satisfies conditions
  (i)--(iv).  We now prove that it also satisfies condition (v).
  Consider a candidate $c_{i} \in C'$ that is included in the set of
  winners under $\Phi$. Let $V^{c_i}_{4}$ be the subcollection of the
  voters from $V_4$ that are assigned to $c_i$ under $\Phi$
  (naturally, $\|V^{c_i}_4\| = \frac{n}{k}$).  Let $V^{c_i}$ be the
  subcollection of $V$ containing the voters corresponding to those
  in $V_4^{c_i}$ (again, $\|V^{c_i}\| = \frac{n}{k}$).  Let
  $s(V_{4}^{c_i})$ be the dissatisfaction of the voters in $V_4^{c_i}$
  under $\Phi$ and let $s(V^{c_i})$ denote the dissatisfaction the
  voters in $V^{c_i}$ would have if they were assigned to $c_i$ (in
  $I$). The total dissatisfaction of the voters assigned to $c_i$
  under $\Phi$ is:
  \begin{align*}
    & \textstyle(\|H\| + \sum_j \|E_j\| + \sum_j \|F_j\| + \|E\| - \|E_i\|) + s(V_{4}^{c_i}) = \\
    & \textstyle(\|H\| + \sum_j \|E_j\| + \sum_j \|F_j\| + \|E\| - 2m^2n - m - (m-i)(2mn+1)\frac{n}{k}) \\
    & \textstyle+ \frac{n}{k}(\|H\| + \sum_j \|D_j\| + (m-i)(2mn+1) + mn) + s(V^{c_i})) = \\
    & \textstyle(\frac{n}{k}+1)(\|H\| + \sum_j \|E_j\| + mn) +
    s(V^{c_i})\textrm{,}
  \end{align*}
  which shows that the dissatisfaction of the voters in $I_{sc}$ that
  are assigned to $c_i$ under $\Phi$ differs from the dissatisfaction
  of the respective voters in $I$, had they been assigned to $c_i$,
  only by a value that depends on $n$, $m$, and $k$ (but not on
  $i$). Thus condition (v) holds.

  It remains to show that an optimal assignment function for $I_{sc}$
  uses exactly $k$ candidates from $C'$.  Let $\Phi_{sc}$ be an
  optimal assignment function for $I_{sc}$ that satisfies conditions
  (i)--(v) (and thus uses exactly $k$ candidates from $C'$). Let
  $\Phi'$ be an assignment function for $I_{sc}$ that uses more thank
  $k$ candidates from $C'$. We will show that the total
  dissatisfaction under $\Phi'$ is higher than under $\Phi_{sc}$.  It
  is easy to see that the average dissatisfaction of the voters
  in $\Phi_{sc}^{-1}(C_{sc}\setminus C')$ under $\Phi_{sc}$ is lower or equal
  than that of the voters in $(\Phi')^{-1}(C_{sc}\setminus C')$ under $\Phi'$
  (this follows by contrasting properties (i)--(iii) and observations
  (b) and (c)).
  
  By the same reasoning as in the proof of property (v), we note that
  for each candidate $c_i \in C'$, the dissatisfaction of the voters
  that $\Phi'$ assigns to $c_i$ can be lower bounded by the
  dissatisfaction for the case where $c_i$ is assigned to the voter
  from $V_3$ that ranks $c_i$ highest and to $\frac{n}{k}$ voters from
  $V_4$ that rank $c_i$ highest. The voter from $V_3$ ranks $c_i$ in a
  position no better than $\|H\| + \sum_j \|E_j\| + \sum_j \|F_j\| +
  \|E\| - \|E_i\|$ and each of the $V_4$ voters ranks $c_i$ in
  position no better than $\|H\| + \sum_j \|D_j\| + (m-i)(2mn+1) +
  mn$.  Thus, under $\Phi'$, the dissatisfaction of the voters to
  whom $c_{i}$ is assigned can be lower bounded by:
  \begin{align*}
    & \textstyle \|H\| + \sum_j \|E_j\| + \sum_j \|F_j\| + \|E\| - \|E_i\| \\
    & \textstyle + \frac{n}{k}(\|H\| + \sum_j \|D_j\| + (m-i)(2mn+1) + mn) \\
    & \textstyle = \|H\| + \sum_j \|E_j\| + \sum_j \|F_j\| + \|E\|  \\
    & \textstyle + \frac{n}{k}(\|H\| + \sum_j \|D_j\| + mn) - 2m^2n - m.
  \end{align*}
  Similarly, if $c_i$ is selected under $\Phi_{sc}$ then the
  dissatisfaction of the voters to whom $c_i$ is assigned under
  $\Phi_{sc}$ can be upper bounded as follows (the idea of the upper
  bound is similar to the one for the lower bound above, except now
  we take the upper bound regarding the position of $c_i$ in the preference
  orders of voters from $V_4$):
  \begin{align*}
    & \textstyle \|H\| + \sum_j \|E_j\| + \sum_j \|F_j\| + \|E\| - \|E_i\| \\
    & \textstyle + \frac{n}{k}(\|H\| + \sum_j \|D_j\| + (m-i)(2mn+1) + mn + m) \\
    & \textstyle = \|H\| + \sum_j \|E_j\| + \sum_j \|F_j\| + \|E\| \\
    & \textstyle + \frac{n}{k}(\|H\| + \sum_j \|D_j\| + mn + m) - 2m^2n
    - m.
  \end{align*}
  Note that neither of the bounds depends on $i$ and that the
  difference between the upper bound for $\Phi_{sc}$ and the lower
  bound for $\Phi'$ is $m\frac{n}{k}$.  Thus for each subset of
  $k$ candidates from $C'$ that are assigned under $\Phi'$, the total
  dissatisfaction these candidates impose is, at best, better by an
  additive factor of $mn$ than the dissatisfaction imposed by the $k$
  candidates selected from $C'$ under $\Phi_{sc}$. However, under
  $\Phi'$ there are at least $\frac{n}{k}$ voters outside of $V_4$ 
  that are assigned to candidates in $C'$. Each of these 
  voters ranks the candidate assigned to her in a position 
  worse than $p_2 > 2m^2n + m$.  On the other hand, under
  $C_i$ each of these voters is assigned to some candidate that 
  she ranks in a position no worse than $m -k + 1$.  Since
  $\frac{n}{k}(2m^2n + m - m + k -1) > mn$, it holds that
  $\Phi_{sc}(V_{sc}) < \Phi'(V_{sc})$.

  We conclude that an optimal assignment function $\Phi_{sc}$ assigns
  voters to exactly $k$ candidates from $C'$ and that the
  dissatisfaction of the voters in $I_{sc}$ under $\Phi_{sc}$ is equal
  to the optimal dissatisfaction of the voters in $I$ plus an easily
  computable value that depends on $m$, $n$, and $k$ only. This
  completes the proof.
\end{proofof}

Betzler et al.~\cite{bet-sli-uhl:j:mon-cc} have shown a similar
hardness result for single-peaked elections; however, their construction 
uses an artificial dissatisfaction function rather than Borda. 
The complexity of winner
determination under $\alpha_\borda$-$\ell_1$-Monroe for single-peaked
elections is still an open question. As our result answers this question
in the case of single-crossing elections, it is
tempting to ask if our proof approach could be used for 
single-peaked elections. Unfortunately, this does not seem to be the case.
The difficulty lies in jointly implementing voters $V_3+V_4$ (and, in particular, 
positioning the candidates $c_1, \ldots, c_m$).

\subsection{$\boldsymbol{\ell_\infty}$-Monroe for Single-Crossing Narcissistic Profiles}
Given our hardness result for Monroe's rule,
it is natural to ask if we can further restrict the problem of computing Monroe's winners
to obtain tractability. To this end, we focus on the egalitarian version of Monroe'e rule
(the results of Betzler et al.~\cite{bet-sli-uhl:j:mon-cc} suggest that it is likely 
to be more tractable than the utilitarian version of this rule), 
and consider an additional domain restriction, namely, {\em narcissistic} preferences. 

An election is said to be {\em narcissistic} if every candidate is
ranked first by at least one voter.  Intuitively, such elections arise
when candidates are allowed to vote for themselves.  This notion was
introduced by Bartholdi and
Trick~\cite{bar-tri:j:stable-matching-from-psychological-model}, and
was used in the context of fully proportional representation by Cornaz
et al.~\cite{cor-gal-spa:c:sp-width}.  It turns out that it is useful
in our setting, too: we will show that the egalitarian version of
Monroe's rule admits an efficient winner determination algorithm under
single-crossing narcissistic preferences.

\begin{lemma}\label{lem:n-cons-contig}
  Let $E=(C, V)$ be a single-crossing narcissistic election with 
  $C = \{c_1, \ldots, c_m\}$, $V = (v_1, \ldots, v_n)$, where $v_1$ has
  preference order $c_1 \succ \cdots \succ c_m$. For every $k\in [m]$
  and every dissatisfaction function $\alpha$ for $m$ candidates,
  there is an optimal $k$-Monroe assignment $\Phi$ for $E$ under
  $\alpha$-$\ell_\infty$-Monroe  such that for each
  candidate $c_i \in C$, if $\Phi^{-1}(c_i) \neq \emptyset$ then there
  are two integers, $t_i$ and $t_i'$, $t_i \leq t_i'$, such that
  $\Phi^{-1}(c_i) = \{v_{t_i}, v_{t_i+1}, \ldots,
  v_{t_i'}\}$. Moreover, for each $i < j$ such that $\Phi^{-1}(c_i)
  \neq \emptyset$ and $\Phi^{-1}(c_j) \neq \emptyset$ it holds that
  $t'_i < t_j$.
\end{lemma}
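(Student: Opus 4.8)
The plan is to start from an arbitrary optimal $k$-Monroe assignment $\Phi$ under $\alpha$-$\ell_\infty$-Monroe and repeatedly perform local swaps that remove ``order violations'' without increasing the egalitarian cost, until the contiguous blocks property holds. Write $d^\ast = \ell_\infty(\Phi)$ for the optimal cost and let $p$ be the largest position with $\alpha(p)\le d^\ast$ (well defined since $\alpha(1)=0\le d^\ast$, so in particular $p\ge 1$). For a candidate $c$, call a voter $v$ \emph{acceptable} for $c$ if $\pos_v(c)\le p$, equivalently $\alpha(\pos_v(c))\le d^\ast$, and let $S_c$ be the set of voters acceptable for $c$. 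Because the objective looks only at the worst-off voter, any reassignment that keeps every voter matched to an acceptable candidate and preserves the Monroe size constraints is again optimal; this is exactly where the $\ell_\infty$ objective (as opposed to $\ell_1$) buys us the freedom to rearrange.

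The key structural step, and the only place where narcissism enters, is to show that for every candidate $c_i$ the map $v\mapsto\pos_v(c_i)$ is \emph{single-dipped} in the voter order. Writing $\pos_v(c_i)=1+A(v)+B(v)$ with $A(v)=|\{\,j<i : c_j\succ_v c_i\,\}|$ and $B(v)=|\{\,j>i : c_j\succ_v c_i\,\}|$, single-crossingness (with $v_1: c_1\succ\cdots\succ c_m$) forces each indicator counted by $A$ to switch only from $1$ to $0$ and each indicator counted by $B$ only from $0$ to $1$, so $A$ is non-increasing and $B$ is non-decreasing. Narcissism supplies a voter $v^\ast$ with $\pos_{v^\ast}(c_i)=1$, i.e. $A(v^\ast)=B(v^\ast)=0$; monotonicity then gives $B\equiv 0$ on $[1,v^\ast]$ and $A\equiv 0$ on $[v^\ast,n]$, so $\pos_v(c_i)$ is non-increasing up to $v^\ast$ and non-decreasing afterwards. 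Hence each acceptable set $S_{c_i}$ is an interval $[\ell_i,r_i]$ containing the voters who rank $c_i$ first. A short comparison using the single-crossing threshold $t_{ij}$ of a pair $c_i,c_j$ with $i<j$ --- on $[1,t_{ij}]$ one has $\pos_v(c_i)<\pos_v(c_j)$, and on the complementary suffix the reverse --- shows that these intervals form a staircase: $\ell_i\le\ell_j$ and $r_i\le r_j$ whenever $i<j$.

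With the staircase in hand, I would define an order violation as a pair $a<b$ with $\Phi(v_a)=c_j$, $\Phi(v_b)=c_i$ and $i<j$, and resolve it by swapping, i.e. reassigning $v_a\mapsto c_i$ and $v_b\mapsto c_j$. This preserves each candidate's load, hence the Monroe constraints. To see it stays within $d^\ast$, note $a\in S_{c_j}=[\ell_j,r_j]$ and $b\in S_{c_i}=[\ell_i,r_i]$; then $\ell_i\le\ell_j\le a<b\le r_i\le r_j$ gives $a\in[\ell_i,r_i]$ and $b\in[\ell_j,r_j]$, so both reassigned voters remain acceptable and the new assignment is again optimal. Termination follows because the integer potential $\sum_v v\cdot\mathrm{idx}(\Phi(v))$, with $\mathrm{idx}(c_i)=i$, strictly increases under each swap (the change is $(b-a)(j-i)>0$) and is bounded above. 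When no violation remains, for all $i<j$ every voter assigned to $c_i$ precedes every voter assigned to $c_j$, which immediately forces each nonempty $\Phi^{-1}(c_i)$ to be a contiguous block with the blocks ordered as the statement requires.

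The main obstacle is the structural claim that narcissism makes each position function single-dipped and that the resulting acceptable intervals form a staircase; once that is in place the swap argument is routine bookkeeping. I would take care to check that $p\ge 1$ (so peak voters are acceptable and $S_{c_i}$ is nonempty for every used candidate), and that the swap uses nothing beyond $a<b$ and the two endpoint inequalities. It is worth noting that the $\ell_1$ version would fail at precisely this point: there a swap can strictly increase the total dissatisfaction even while keeping every voter acceptable, consistent with the paper's remark that the contiguous blocks property does not extend to the utilitarian setting (nor to general single-crossing profiles, where single-dippedness can break down).
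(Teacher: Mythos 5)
Your proposal is correct, and although it shares the paper's basic mechanism---start from an optimal assignment and repeatedly swap the candidates assigned to an out-of-order pair of voters, arguing that $\ell_\infty$ cannot increase---it is organized around a genuinely different key lemma. The paper has no analogue of your structural claim that each position function $v\mapsto\pos_v(c_i)$ is single-dipped with the acceptable (sublevel) sets forming a staircase of intervals; instead, for each swap it runs a two-case analysis (on whether the earlier voter prefers the later-ranked candidate or not), in each case invoking narcissism to produce a voter who ranks the relevant candidate first and single-crossing to conclude that a swapped voter ranks her new candidate at least as highly as the other voter ranked it, and it handles termination by pushing the block of the last-ranked selected candidate to the end of the profile and recursing, rather than by your potential function over inversions. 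Your single-dippedness lemma abstracts exactly the comparisons the paper re-derives on the fly, and it buys modularity: it isolates where narcissism enters (the dip exists and sits on the correct side of each crossing point $t_{ij}$) and where $\ell_\infty$ enters (freedom to rearrange within sublevel sets), which also makes transparent why the argument fails for $\ell_1$ and for non-narcissistic profiles, consistent with the counterexamples in Section~\ref{sec:contig}. The paper's route is more self-contained and constructs the contiguous blocks directly from the back of the profile, mirroring the dynamic program of Theorem~\ref{thm:monroe-n-cons-easy}. One point to tighten in a full write-up: the staircase inequalities $\ell_i\le\ell_j$ and $r_i\le r_j$ for $i<j$ need the explicit observation that the narcissistic voter of $c_i$ lies weakly before $t_{ij}$ while that of $c_j$ lies strictly after it; with that in hand, the contradiction argument you gesture at (a voter outside one interval but inside the other, on the wrong side of $t_{ij}$, would have to be acceptable after all) goes through, and the rest of your proof---the chain $\ell_i\le\ell_j\le a<b\le r_i\le r_j$ ensuring both swapped voters stay acceptable, load preservation, and the strictly increasing potential $\sum_v v\cdot\mathrm{idx}(\Phi(v))$---is sound.
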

\begin{proof}
  Suppose that we seek a parliament of size $k$.
  Let $\alpha$ be some dissatisfaction function, let $\Phi$ be an
  optimal $k$-Monroe-assignment function for $E$ under
  $\alpha$-$\ell_\infty$-Monroe, and let $t$ be the smallest value
  such that under $\Phi$ each voter is assigned to a candidate that
  this voter ranks in position $t$ or higher.  We will now show how to transform
  $\Phi$ into the form required in the statement of the lemma.

  Let $s = \max\{ s' \mid$ there is $v_i$ such that $\Phi(v_i) = c_{s'}
  \}$; that is, let $c_s$ be the last candidate in $v_1$'s preference
  order that is assigned to some voters. Let $n_s$ be the number of
  voters to whom $c_s$ is assigned.  We claim that there is a
  $k$-Monroe-assignment function $\Phi''$ with the same egalitarian
  utility as that of $\Phi$ that assigns $c_s$ to the voters $v_{n-n_s+1},
  \ldots, v_n$.

  If $\Phi$ itself satisfies this requirement, we are done. Otherwise, we
  transform it as follows. Pick the smallest $i$ 
  such that $\Phi(v_i) = c_s$. There must be some voter $v_j$,
  $j > i$, such that $\Phi(v_j) \neq c_s$ (otherwise $\Phi$ would have
  satisfied our requirement). Let $c_r = \Phi(v_j)$.  We set
  $$
  \Phi'(v_\ell)=
  \begin{cases}
  \Phi(v_\ell) &\text{if $\ell\neq i, j$}\\
  c_r          &\text{if $\ell=i$}\\
  c_s	     &\text{if $\ell=j$}.
  \end{cases}
  $$
  Clearly. $\Phi'$ is a
  $k$-Monroe-assignment function for $E$ and we now show that
  $\ell_\infty(\Phi') \leq \ell_\infty(\Phi)$. We consider the
  following two cases.
  \begin{description}

  \item[$\boldsymbol{v_i}$ prefers $\boldsymbol{c_s}$ to
    $\boldsymbol{c_r}$.]  Since $v_1$ prefers $c_r$ to $c_s$ and we
    assume that $v_i$ prefers $c_s$ to $c_r$, it must be the case that
    $v_j$ also prefers $c_s$ to $c_r$. Further, since $E$ is
    narcissistic, there is a voter $v_\ell$, $\ell < i$, who ranks $c_r$
    first. This means that $v_i$ ranks $c_r$ at least as highly
    as $v_j$ does. Indeed, otherwise there would be a candidate
    $c$ such that $v_i$ prefers $c$ to $c_r$ and $v_j$ prefers $c_r$
    to $c$; this is impossible since $v_\ell$ prefers $c_r$ to $c$ and
    $E$ is single-crossing.  This means that
    $\ell_\infty(\Phi') \leq \ell_\infty(\Phi)$.

  \item[$\boldsymbol{v_i}$ prefers $\boldsymbol{c_r}$ to
    $\boldsymbol{c_s}$.] This means that assigning $c_r$ to $v_i$ does
    not increase the dissatisfaction induced by the assignment. We
    have to show that assigning $c_s$ to $v_j$ does not increase it
    either. If $v_j$ prefers $c_s$ to $c_r$, then we are done.
    Thus, we can assume that $v_j$
    prefers $c_r$ to $c_s$. Since $V$ is narcissistic, there
    is a vote $v_q$ with $q > j$ where $c_s$ is ranked first. This
    means that $v_j$ ranks $c_s$ at least as highly as
    $v_i$ does. Indeed, otherwise there would be some candidate $c$ such that $v_i$
    prefers $c_s$ to $c$, $v_j$ prefers $c$ to $c_s$, and $v_q$
    prefers $c_s$ to $c$; this is impossible, since $E$ is single-crossing.
    Thus $\ell_\infty(\Phi') \leq \ell_\infty(\Phi)$.
\end{description}

This proves that $\Phi'$ is a $k$-Monroe assignment with egalitarian
dissatisfaction no worse than that of $\Phi$. By repeating the same
reasoning sufficiently many times, we eventually reach an assignment
function $\Phi''$ where $c_s$ is assigned exactly to the voters
$v_{n-n_s+1}, \ldots, v_n$: each iteration gets us closer to this goal. 
We then continue in the same way to handle the rest of the elected candidates. 
That is, we set $s^{(2)} = \max\{ s' \mid$ there is $v_i$ such
that $\Phi''(v_i) = c_{s'}$ and $s' < s\}$, we set $n_{s^{(2)}}$ to be
the number of voters to which $\Phi''$ assigns $c_{s^{(2)}}$, and
transform $\Phi''$ to be a $k$-Monroe-assignment function that assigns
$c_{s^{(2)}}$ to voters $v_{n-n_s-n_{s^{(2)}}+1}, \ldots, v_{n-n_s}$,
and $c_s$ to voters $v_{n-n_s+1}, \ldots, v_{n}$. We repeat in the
same manner until we reach a $k$-Monroe-assignment function that
satisfies the statement of the lemma.
\end{proof}

Based on Lemma~\ref{lem:n-cons-contig}, 
it is easy to construct a dynamic programming algorithm for
$\ell_\infty$-Monroe.

\begin{theorem}\label{thm:monroe-n-cons-easy}
  For every family $\alpha$ of polynomial-time computable
  dissatisfaction functions, there is a polynomial-time algorithm that
  given a single-crossing narcissistic election $E$ and a positive
  integer $k$ finds an optimal $k$-Monroe assignment for $E$ under
  $\alpha$-$\ell_\infty$-Monroe.
\end{theorem}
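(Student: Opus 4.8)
The plan is to build a dynamic programming algorithm that exploits the contiguous blocks structure guaranteed by Lemma~\ref{lem:n-cons-contig}. By that lemma, we may restrict attention to $k$-Monroe assignments in which the elected candidates are a subset $c_{i_1}, \dots, c_{i_k}$ with $i_1 < \dots < i_k$, and the voters assigned to them form $k$ consecutive blocks that partition $V$ in the same left-to-right order. Thus an optimal assignment is completely described by (a) which candidates are elected, and (b) where the block boundaries fall. The key point for the egalitarian objective is that $\ell_\infty$ decomposes well across blocks: the overall dissatisfaction is just the maximum, over all blocks, of the worst dissatisfaction inside a single block.

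Concretely, I would define a table entry $B[i,j,t]$ equal to the minimum achievable $\ell_\infty$ value when we use a $t$-Monroe assignment on the subelection consisting of candidates $\{c_1, \dots, c_j\}$ and voters $(v_1, \dots, v_i)$, subject to the Monroe block-size constraint that each of the $t$ chosen candidates is assigned to between $\lfloor n/k \rfloor$ and $\lceil n/k \rceil$ voters (the constraint being with respect to the global $n$ and $k$, not the truncated instance). The recurrence guesses candidate $c_j$ as the last elected candidate and the left endpoint $i^{\ast}+1$ of its block; the block $v_{i^{\ast}+1}, \dots, v_i$ is assigned to $c_j$, and its contribution to the egalitarian cost is $\max_{i^{\ast} < \ell \le i} \alpha(\pos_{v_\ell}(c_j))$. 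This gives
\begin{align*}
B[i,j,t] = \min\Big\{ B[i,j-1,t],\ \min_{i^{\ast}} \max\big( B[i^{\ast}, j-1, t-1],\ \textstyle\max_{i^{\ast}<\ell\le i}\alpha(\pos_{v_\ell}(c_j)) \big) \Big\},
\end{align*}
where the inner minimization ranges over those $i^{\ast}$ for which the block size $i - i^{\ast}$ respects the $\lfloor n/k\rfloor, \lceil n/k\rceil$ bounds. The answer is read off from $B[n,m,k]$, and the assignment itself is recovered by standard backtracking. Base cases handle $t=0$ (feasible only when $i=0$) and the empty voter prefix.

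The main obstacle, and the essential difference from the Chamberlin--Courant case of Theorem~\ref{thm:cc-easy}, is enforcing the Monroe block-size constraints while still obtaining a polynomial-size table. Under Chamberlin--Courant a candidate may represent arbitrarily many voters, so the block endpoints are unconstrained; under Monroe each block must have size in $\{\lfloor n/k\rfloor, \lceil n/k\rceil\}$. I would handle this by noting that Lemma~\ref{lem:n-cons-contig} already guarantees an optimal contiguous solution exists, so feasibility of the global size constraints is automatically compatible with the block structure; the recurrence simply rejects choices of $i^{\ast}$ that give a block of the wrong size. Care is needed to ensure the table genuinely tracks "how many candidates have been used so far'' so that the final $k$ blocks partition all $n$ voters with the correct sizes, but since the block sizes are tightly pinned and the blocks are forced to be contiguous and ordered, the bookkeeping stays within a table of size $O(nmk)$ with $O(n)$ work per entry (the running maxima $\max_\ell \alpha(\pos_{v_\ell}(c_j))$ can be precomputed), yielding the claimed polynomial running time.
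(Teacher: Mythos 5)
Your proposal is correct and follows essentially the same route as the paper: a dynamic program over (voter prefix, candidate prefix, number of candidates used) that exploits Lemma~\ref{lem:n-cons-contig}, with block sizes pinned to $\lfloor n/k\rfloor$ or $\lceil n/k\rceil$. The only cosmetic difference is that the paper's recurrence minimizes over the last elected candidate $j'\le j$ directly, whereas you use the equivalent ``skip $c_j$'' term $B[i,j-1,t]$; both yield polynomial running time (indeed, since only two block sizes are admissible, each table entry needs $O(m)$ or even $O(1)$ work after precomputing the worst positions, matching the paper's $O(nm^2k)$ bound).
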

\begin{proof}
  Let $E = (C,V)$ be a single-crossing narcissistic election, let
  $\alpha$ be our dissatisfaction function, and let $k$ be the size of
  the parliament that we seek. We assume that $C = \{c_1, \ldots,
  c_m\}$, $V = (v_1, \ldots, v_n)$, and that $v_1$'s preference order
  is $c_1 \succ \cdots \succ c_m$. Based on
  Lemma~\ref{lem:n-cons-contig}, we give a dynamic programming
  algorithm for $\alpha$-$\ell_\infty$-Monroe.

  For each $i,j,t$, $0 \leq i \leq n$, $0 \leq j \leq m$, $0 \leq t
  \leq k$, we define $\bigA{i}{j}{t}$ to be the lowest dissatisfaction
  that one can achieve by assigning to the voters $v_1, \ldots, v_i$
  exactly $t$ candidates from the set $C_j = \{c_1, \ldots, c_j\}$ in
  such a way that each candidate is assigned to either $\lceil
  \frac{n}{k} \rceil$ consecutive voters or to $\lfloor \frac{n}{k} \rfloor$ consecutive
  voters; we set $\bigA{i}{j}{t} = \infty$ if such an assignment does
  not exist. In particular, for each $j$ we have $\bigA{0}{j}{0} = 0$
  and we can compute in time $O(1)$ the value $\bigA{i}{j}{t}$
  whenever at least one of the arguments is $0$. We also adopt the
  convention that $\bigA{i}{j}{t} = \infty$ whenever at least one of
  the arguments is negative.  Further, for every pair of integers $i,i'$ with
  $i \leq i'$ and every candidate $c_j$, we define $\pos(i,i',c_j)$ to be
  the worst position of candidate $c_j$ in the votes that belong to the set
  $\{v_t \mid \max\{1,i\} \leq t \leq \min\{n, i'\} \}$.
  
  It is easy to see that the following equality holds for all
  $i\in[n]$, $j\in[m]$, $t\in[k]$
  (we take the minimum over an empty set to be $\infty$):
  \begin{align*}
  \bigA{i}{j}{t} = \min_{j' \leq j}\bigg\{\min\Big\{ 
  &\max\big\{ \alpha(\pos(i-\left\lfloor\frac{n}{k}\right\rfloor+1, i, c_{j'})), 
  \bigA{i-\left\lfloor\frac{n}{k}\right\rfloor}{j'-1}{t-1}\big\},\\
  &\max\big\{ \alpha(\pos(i-\left\lceil\frac{n}{k}\right\rceil+1, i, c_{j'})),
  \bigA{i-\left\lceil\frac{n}{k}\right\rceil}{j'-1}{t-1}\big\} \Big\}
\bigg\}.
  \end{align*}

  Using this equality and standard dynamic programming, it is easy to
  compute $\bigA{n}{m}{k}$ and the $k$-Monroe-assignment function that
  achieves this dissatisfaction in polynomial time. To be more
  precise, computing $\bigA{n}{m}{k}$ requires time
  $O(nm^2k)$;   to achieve this running time, we first compute
  all the necessary values
  $\pos(i-\left\lfloor\frac{n}{k}\right\rfloor+1,i,c_j)$ and
  $\pos(i-\left\lceil\frac{n}{k}\right\rceil+1,i,c_j)$ in time
  $O(nm)$. If $k$ divides $n$, the running time can be improved 
    to $O(nm^2)$ because we can omit the parameter $t$ in
    $\bigA{i}{j}{t}$.  
  Optimality of the computed assignment follows by
  Lemma~\ref{lem:n-cons-contig}.
\end{proof}

In a way, Theorem~\ref{thm:monroe-n-cons-easy} is not new: It can be
shown that narcissistic elections are necessarily single-peaked (this
is implicit in the work of Barber\`a and
Moreno~\cite{bar-mor:j:top-monotonicity}), and for single-peaked
elections Betzler et al~\cite{bet-sli-uhl:j:mon-cc} provide a
polynomial-time algorithm for $\ell_\infty$-Monroe (Proposition~5
in~\cite{bet-sli-uhl:j:mon-cc}).  Thus, if we only care about
polynomial-time computability, Theorem~\ref{thm:monroe-n-cons-easy}
does not appear to be useful. However, there are two reasons to prefer
the algorithm described in Theorem~\ref{thm:monroe-n-cons-easy}.
First, our algorithm is considerably faster: the running time of
Betzler et al.'s algorithm is $O(n^3m^3k^3)$, while for our algorithm
it is $O(nm^2k)$.  Second, our algorithm produces an assignment that
has the contiguous blocks property.  In contrast, in
Section~\ref{sec:contig} we show that this is not necessarily the case
for the algorithm of Betzler et al.

\section{Contiguous Blocks Property: Counterexamples}\label{sec:contig}
We will now explore the limitations of the algorithmic approach
that is based on the contiguous blocks property.

First, we provide an example of a single-crossing
election %
such that all optimal assignments under Monroe's rule do not possess
the contiguous blocks property; in fact, the approximation ratio of
every algorithm that produces assignments with this property is
$\Omega(m)$. This result holds both for the utilitarian and for the
egalitarian version of Monroe's rule.

\begin{example}\label{ex:non-contig-sc}
{\em
  Let $C = \{c_1, c_2\} \cup A \cup B$ be a set of candidates, where $\|A\| = \|B\| = m$.
  Let $V = V_1 + V_2 + V_3 + V_4$ be a list of voters, where
  each $V_i$ contains $n$ voters. Voters in each $V_i$ have
  identical preference orders, defined as follows.
\begin{align*}
  V_1: c_1 \succ \ordset{B} \succ c_2 \succ \ordset{A}, \\
  V_2: c_1 \succ c_2 \succ \revset{B} \succ \ordset{A}, \\
  V_3: c_1 \succ c_2 \succ \ordset{A} \succ \revset{B}, \\
  V_4: c_1 \succ \revset{A} \succ c_2 \succ \revset{B}.
  \end{align*}
  We seek a $2$-member parliament, and use Borda misrepresentation function $\alphaborda$.

  If we partition voters into contiguous blocks, then the best we can do is to assign $c_1$
  to $V_1+V_2$ and $c_2$ to $V_3+V_4$ (or the other way round), achieving
  dissatisfaction of $n(m+2)$ and $m+1$ under $\alphaborda$-$\ell_1$-Monroe and $\alphaborda$-$\ell_\infty$-Monroe, 
  respectively.
  In contrast, if we assign $c_1$ to $V_1+V_4$ and $c_2$ to $V_2+V_3$, then
  the total dissatisfaction under $\alphaborda$-$\ell_1$-Monroe and $\alphaborda$-$\ell_\infty$-Monroe 
  is $2n$ and $1$, respectively.
}
\end{example}

Further, even for narcissistic single-crossing elections, if we
consider the utilitarian version of Monroe's rule, imposing the
contiguous blocks property may lead to suboptimal solutions.

\begin{example}\label{ex:non-contig-util} {\em We consider the set $C
    = \{a,b,c,d,e,f\}$ of candidates and the following twelve voters
    (the profile is clearly narcissistic and it is easy to check that
    it is single-crossing): 
    \newcommand{\sixvote}[6]{ #1 \succ #2 \succ #3 \succ #4 \succ #5 \succ #6}
    \newcommand{\lowest}{\mathrm{lowest}}
    \allowdisplaybreaks \begin{align*}
      v_1&: \sixvote abcdef, \\
      v_2&: \sixvote bacdef, \\
      v_3&: \sixvote bcadef, \\
      v_4&: \sixvote cbadef, \\
      v_5&: \sixvote cbadef,\\
      v_6&: \sixvote cbadef, \\
      v_7&: \sixvote cbdaef, \\ %
      v_8&: \sixvote cdbaef,\\
      v_9&: \sixvote defcba,\\
      v_{10}&: \sixvote efdcba,\\
      v_{11}&: \sixvote efdcba,\\
      v_{12}&: \sixvote fedcba. \end{align*} 
    We seek a $2$-member parliament, and our voting rule
    is $\alpha_\borda$-$\ell_1$-Monroe.  If we require the assignment
    function to have the contiguous blocks property, then the
    unique optimal solution assigns $b$ to each voter in $V_1 = (v_1,
    \ldots, v_6)$ and $d$ to each voter in $V_2 = (v_7, \ldots, v_{12})$.
    Under this assignment the total misrepresentation 
    of the voters in $V_1$ and $V_2$ is given by $4$ and $9$, respectively, 
    so the optimal total misrepresentation for assignments with
    the contiguous blocks property is $13$.

    On the other hand, without the contiguous blocks property, 
    we can assign candidate $e$ to voters $v_1, v_2, v_9, v_{10}, v_{11}, 
    v_{12}$, and candidate $c$ to voters $v_3, v_4, v_5, v_6, v_7, v_8$, for
    the total misrepresentation of $11$. Indeed, we will now argue 
    that this is the unique optimal solution.
 
    For each candidate $x$, set
    $
    \lowest(x)=\min\left\{\sum_{i\in I}\alpha_\borda(\pos_{v_i}(x))\mid I\subseteq [12], \|I\|=6\right\}.
    $
    Intuitively, $\lowest(x)$ corresponds to assigning $x$ in the best possible way.
    We have 
    \begin{align*}
      &\lowest(a) = 9,&  &\lowest(b) = 4,&  &\lowest(c) = 1 \\
      &\lowest(d) = 9,& &\lowest(e) = 10,& &\lowest(f) =
      14.  
    \end{align*} 
    Thus each assignment function that gives total
    misrepresentation of at most $11$ produces one of the following sets
    of winners: 
    $\{a, c\}$, $\{b,c\}$, $\{c,d\}$, or $\{c,e\}$. 
    
    Now, if the set of winners is $\{a, c\}$ or $\{b, c\}$, 
    the total misprepresentation is at least $12$, because
    for each voter in $(v_9, v_{10}, v_{11}, v_{12})$
    assigning a candidate from either of these sets
    results in a misrepresentation of at least $3$.
    
    Consider the candidates $c$ and $d$.
    If each voter were
    assigned to her more preferred candidate among these two, we would get a
    $2$-CC-assignment function with misrepresentation $11$. However,
    under this assignment $8$ voters are assigned to $c$, so the
    best Monroe assignment for these two candidates has
    misrepresentation at least $12$. Finally, it is clear that no
    assignment function that uses $c$ and $e$ can have a lower
    misrepresentation than $11$ and we have seen that such a
    $2$-Monroe assignment exists.

    To summarize, imposing the contiguous blocks property may lead to
    a suboptimal assignment whose set of winners is disjoint from the
    one for the optimal assignment, even in a narcissistic
    single-crossing election.  }
\end{example}

We now consider single-peaked preferences.  Note that the definition
of a single-peaked election does not impose any restrictions on the
voter ordering, and therefore it is not immediately clear what is the
correct way to extend the contiguous blocks property to this
setting. However, it seems natural to order the voters according to
their most preferred candidates (using the candidate order given by
the axis), breaking ties according to their second most preferred
candidate, etc.  That is, consider an election $E=(C, V)$ that is
single-peaked with respect to the order $c_1 \lhd\dots \lhd c_m$. A
voter with a preference order $c_{i_1}\succ\dots\succ c_{i_m}$ can be
identified with the string $i_1\dots i_m$.  We reorder the voters so
that if $i<j$ then the string associated with $v_i$ is
lexicographically smaller than or equal to the string associated with
$v_j$.

We then ask if for every single-peaked election there exists an
optimal assignment for Chamberlin--Courant's rule or Monroe's rule
that satisfies the contiguous blocks property with respect to this
ordering.  We will now show that the answer is ``no'', both for the
egalitarian and the utilitarian version of both rules. Indeed, just as
in Example~\ref{ex:non-contig-sc}, imposing the contiguous blocks
property has a cost of $\Omega(m)$.

\begin{example}\label{ex:non-contig-sp}
{\em
Let $C = \{x_1, \dots, x_m, y_1, \dots, y_m, a, b, c, d\}$.
Let $V=(v_1, v_2, v_3, v_4)$, where 
\begin{align*}
  v_1: a\succ x_1\succ\dots\succ x_m \succ b \succ c\succ d \succ y_1\succ\dots\succ y_m, \\
  v_2: b \succ c\succ d \succ y_1\succ\dots\succ y_m \succ a\succ x_1\succ\dots\succ x_m, \\ 
  v_3: c\succ b\succ a\succ x_1\succ\dots\succ x_m \succ d \succ y_1\succ\dots\succ y_m, \\
  v_4: d \succ y_1\succ\dots\succ y_m \succ c\succ b\succ a\succ x_1\succ\dots\succ x_m.
  \end{align*}
It is easy to see that $E=(C, V)$ is single-peaked with respect to the axis
\[
x_m\succ\dots\succ x_1 \succ a \succ b \succ c\succ d \succ y_1\succ\dots\succ y_m.
\]
In fact, it can be shown that all axes witnessing that $E$ is single-peaked 
have the property that $a$, $b$, $c$, and $d$ appear in the center of the axis, 
ordered as $a>b>c>d$ or $d>c>b>a$ 
(this is implied, e.g., by the analysis in~\cite{esc-lan-ozt:c:single-peaked-consistency}).
Thus the ordering of the voters induced by the axis is $(v_1, v_2, v_3, v_4)$.
We seek a $2$-member parliament, and use Borda misrepresentation function $\alphaborda$.

Suppose that we assign $a$ to $v_1$ and $v_3$ and $d$ to $v_2$ and $v_4$.
Then the total dissatisfaction under $\alphaborda$-$\ell_1$-Monroe and $\alphaborda$-$\ell_\infty$-Monroe
is $4$ and $2$, respectively. However, if we impose the contiguous blocks property, 
then the optimal assignment for Monroe's is to assign $b$ to $v_1$ and $v_2$,
and $c$ to $v_3$ and $v_4$; this results in total dissatisfaction of $2(m+1)$
and $m+1$ with respect to $\alphaborda$-$\ell_1$-Monroe and $\alphaborda$-$\ell_\infty$-Monroe,
respectively. For $\alphaborda$-$\ell_1$-CC, we can obtain a somewhat better
dissatisfaction than for $\alphaborda$-$\ell_1$-Monroe, by assigning $b$ to $v_1$, $v_2$,
and $v_3$, and assigning $d$ to $v_4$. However, even for Chamberlin--Courant's rule
under every assignment that has the contiguous blocks property, 
at least one voter will be assigned to a candidate
that she ranks in position $m+2$ or lower, so the total (egalitarian or utilitarian)
dissatisfaction will be at least $m+1$.
}
\end{example}

Example~\ref{ex:non-contig-sp} illustrates that our algorithms for
Chamberlin--Courant's rule (for single-crossing elections) and for the
egalitarian version of Monroe's rule (for single-crossing narcissistic
elections) are quite different from the algorithms for these problems
proposed by Betzler et al.~\cite{bet-sli-uhl:j:mon-cc} for
single-peaked elections: even though all of these algorithms rely on
dynamic programming, the dynamic program for single-peaked elections
is very different from the one for single-crossing elections.  As a
consequence, if we apply Betzler et al.'s algorithm to a
single-crossing narcissistic election, we may fail to get an
assignment that satisfies the contiguous blocks property.
Examples~\ref{ex:non-contig-sc} and~\ref{ex:non-contig-util} further
show that we cannot hope to extend
Theorem~\ref{thm:monroe-n-cons-easy} to all single-crossing elections,
or to utilitarian preferences.

\section{Conclusions}\label{sec:conclusions}
We have considered the complexity of winner determination under
Chamberlin--Courant's and Monroe's rules, for the case of
single-crossing profiles.  We have presented a polynomial-time
algorithm for Chamberlin--Courant's rule for single-crossing elections
(and for elections that are close to being single-crossing in the
sense of having bounded single-crossing width), and an $\np$-hardness
proof for Monroe's rule for the same setting.
Our results further strengthen the intuition that Monroe's rule is
algorithmically harder than Chamberlin--Courant's rule.  Similar
conclusions follow from the work of Betzler et
al.~\cite{bet-sli-uhl:j:mon-cc} and Skowron et
al.~\cite{sko-fal-sli:c:multiwinner}

Inspired by our negative result for Monroe's rule, we have sought
further natural restrictions on voters' preferences.  To this end, we
considered single-crossing narcissistic profiles and developed an
efficient algorithm for the egalitarian version of Monroe's rule under
this preference restriction. However, we showed that our approach does
not extend to general single-crossing elections, or to the utilitarian
version of Monroe's rule.

Perhaps the most obvious direction for future research that is
suggested by our work is understanding the computational complexity of
the utilitarian version of Monroe's rule for single-crossing
narcissistic elections and of egalitarian version of Monroe's rule for
single-crossing profiles. While we have shown that approaches based on
the contiguous blocks property are bound to fail, other approaches may
be more successful.  Going in another direction, perhaps it is
possible to obtain efficient algorithms for our restricted domains and
for dissatisfaction functions other than Borda.

\bibliographystyle{alpha}
\bibliography{grypiotr2006}

\end{document}